\theoremstyle{plain}
\newtheorem{corollary}{Corollary}
\newtheorem{lemma}{Lemma}
\theoremstyle{definition}
\newtheorem{definition}{Definition}
\newtheorem{example}{Example}
\def\ps@headings{%
\def\@oddhead{\mbox{}\scriptsize\rightmark \hfil \thepage}%
\def\@evenhead{\scriptsize\thepage \hfil \leftmark\mbox{}}%
\def\@oddfoot{\scriptsize \@date\hfil Approved for public release; unlimited distribution. Not export controlled per ES-FL-011422-0005.}%
\def\@evenfoot{\scriptsize Approved for public release; unlimited distribution. Not export controlled per ES-FL-011422-0005.\hfil \@date}}
\def\ps@IEEEtitlepagestyle{%
\def\@oddhead{\mbox{}\scriptsize\rightmark \hfil \thepage}%
\def\@evenhead{\scriptsize\thepage \hfil \leftmark\mbox{}}%
\def\@oddfoot{Approved for public release; unlimited distribution. Not export controlled per ES-FL-011422-0005.\hfil}%
\def\@evenfoot{Approved for public release; unlimited distribution. Not export controlled per ES-FL-011422-0005.\hfil}}
\begin{document}

\title{Statistical detection of format dialects using the weighted Dowker complex}

% author names and affiliations
% use a multiple column layout for up to three different
% affiliations
\author{
  \IEEEauthorblockN{Michael Robinson}
  \IEEEauthorblockA{Department of Mathematics and Statistics\\
    American University\\
    Washington, DC\\
    Email: michaelr@american.edu}\\
  
  \IEEEauthorblockN{Letitia W. Li}
  \IEEEauthorblockA{BAE Systems FAST Labs\\
    Arlington, VA\\
    Email: letitia.li@baesystems.com}\\
\and
  \IEEEauthorblockN{Cory Anderson}
  \IEEEauthorblockA{BAE Systems FAST Labs\\
    Arlington, VA\\
    Email: cory.s.anderson@baesystems.com}\\
  
  \IEEEauthorblockN{Steve Huntsman}
  \IEEEauthorblockA{Arlington, VA\\
    Email: sch213@nyu.edu}
}

% make the title area
\maketitle

% As a general rule, do not put math, special symbols or citations
% in the abstract
\begin{abstract}
  This paper provides an experimentally validated, probabilistic model of file behavior when consumed by a set of pre-existing parsers.
  File behavior is measured by way of a standardized set of Boolean ``messages'' produced as the files are read.
By thresholding the posterior probability that a file exhibiting a particular set of messages is from a particular dialect, our model yields a practical classification algorithm for two dialects.
We demonstrate that this thresholding algorithm for two dialects can be bootstrapped from a training set consisting primarily of one dialect.
Both the (parametric) theoretical and the (non-parametric) empirical distributions of file behaviors for one dialect yield good classification performance, and outperform classification based on simply counting messages.

Our theoretical framework relies on statistical independence of messages within each dialect.
Violations of this assumption are detectable and allow a format analyst to identify ``boundaries'' between dialects.
A format analyst can therefore greatly reduce the number of files they need to consider when crafting new criteria for dialect detection, since they need only consider the files that exhibit ambiguous message patterns.
\end{abstract}

% no keywords

\begin{center}\end{center}

% For peer review papers, you can put extra information on the cover
% page as needed:
% \ifCLASSOPTIONpeerreview
% \begin{center} \bfseries EDICS Category: 3-BBND \end{center}
% \fi
%
% For peerreview papers, this IEEEtran command inserts a page break and
% creates the second title. It will be ignored for other modes.
\IEEEpeerreviewmaketitle

\section{Introduction}

This paper provides an experimentally validated, probabilistic model of file behavior on a set of Boolean features (``messages'').
By thresholding the posterior probability that a file exhibiting a particular set of messages is from a particular dialect, our model yields a practical classification algorithm for two dialects.
We demonstrate that this thresholding algorithm for two dialects can be bootstrapped from a training set consisting primarily of one dialect.
Both the (parametric) theoretical and the (non-parametric) empirical distributions of file behaviors for one dialect yield good classification performance.  Furthermore, although message count may correlate with one dialect over another, we show that our approach yields a substantially better classifier because it correctly dispositions the files that do not follow this trend.

Although our theoretical framework relies on statistical independence of messages within each dialect, violations of this assumption are detectable and allow a format analyst to identify ``boundaries'' between dialects.
The key payoff of this approach is that we can tell which message patterns are easy to disposition as one dialect or another, and we can also identify which message patterns are ambiguous.  This information allows an analyst to pinpoint specific features that account for the files on which two (or more) parsers disagree.

\subsection{Application context}

File format specifications are dynamic entities, and are often ambiguous.
A given clause in a specification may have several distinct but self-consistent interpretations,
and these interpretations may impact the interpretations of other, related clauses.
As a result, files from different dialects of a format tend to exhibit divergent behavior at multiple,
independent points within a parser's code base.
Our methodology exploits this independence structure to discriminate between dialects.

When standards organizations attempt to resolve an ambiguity in the specification,
stakeholders bring example files that exhibit specific behaviors.
One may suspect that these files are not unbiased samples from a statistical perspective!
Presently, there appears to be no unbiased way to query the corpus of files ``in the wild'' to find examples of files whose behavior is ambiguous.
Our methodology provides an even-handed, systematic way to use an existing set of parsers and a large corpus of files to identify specific parser message patterns that are either easy (or difficult) to disposition as one dialect or another.
Filtering for these message patterns can enable an expert user to identify a sample of files that are representative of the dominant dialects with different interpretations of the specification.

\subsection{Background}

Beyond what our team has previously published in the past year or so, there appears to be very little work in analyzing file behavior using statistical tools \cite{Robinson_looking_2021,Ambrose_2020}.  In contrast, nearly all existing file format analysis  uses the structure of file \emph{contents} rather than the responses of parsers to those contents (for instance, see \cite{belaoued2015real,al2018ransomware, 8685181,ALAZAB201591,demme2013feasibility}).

Most relevant to our work, \cite{burgcsvfiles} notes that $39$ valid dialects for CSV exist, which makes parsing any given CSV file challenging. Dialects all contain their own set of quotation marks, escape characters, delimiters, headers, comment text, etc. Their approach identifies the dialect of a file by using consistency measures that are expected to score higher if the dialect is identified. When the dialect and its associated delimiters are used correctly in a parse, the number of cells in a row should all be the same, and the types of cells in a single column should all be identical. The PADS project aids users by generating a formal language description of an \emph{ad hoc} format, including a inference algorithm that uses statistical histograms of tokens to identify if tokens should be combined into arrays or structures \cite{fisher2008dirt}.

Statistical analysis of files has also been used for malicious file detection or analysis of collected drives as digital forensic evidence. The DIRIM tool detects suspicious files or drives based on file metadata \cite{rowe2011finding}. It uses PCA and $k$-means to cluster files and determine features more likely indicative of files of interest, such as those attempting to conceal their file type with a misleading file extension.  Statistical features based upon file actions has also been used to identify certain malicious behaviors \cite{Scofield_2017}.  Although they start with data that are formatted similarly to ours, their ultimate goal is simply classification rather than dialect identification.

In other applications, clustering has been used to identify natural language dialects. Lundberg showed that one can relate clusters in recordings from Swedish speakers to spoken dialects. Recordings were converted into acoustic features, and clustered using PCA, $k$-means, and hierarchical clustering \cite{lundberg2005classifying}.  Grieve \emph{et al.} also used hierarchical clustering for classification of dialects across different regions of the US, based on use of lexical alternation variables (for instance, ``actually'' versus ``in fact'') \cite{grieve2011statistical}.

Finally, we note in passing the structural similarity between the weighted Dowker complex approach used here and \emph{factor analysis} \cite{yong2013beginner}.  Factor analysis effectively works from the opposite perspective to ours.  One starts with an assumed dependence between variables rather than locating violations of independence when they occur in the data.  The difference is important; violations of independence occur near ``dialect boundaries'' within a dataset.

\subsection{Assumptions and Limitations}

This paper assumes that there is a pre-existing collection of Boolean ``messages'' produced by several parsers, and that each file under consideration has been processed by each parser.  We will assume that these messages cover all of the relevant aspects of the dialects that we wish to measure, and that the messages are diverse enough to discriminate between these aspects.  Our methodology is format agnostic, in that it does not look at the file contents directly.  File contents are only considered through the lens of the pre-existing parsers, so a user of our method will not need to be a format expert.  Moreover, ``messages'' need not be error messages, \emph{per se}.  For instance, a message could be the presence or absence of a certain byte sequence in the file, or it may simply report whether the exit code for a given parser corresponds to a valid parse.

The theoretical justification for our method relies upon the independence of messages for files within a given dialect.  In a representative sample of messages for the dataset we describe in the next section, we found that most pairs of messages are independent, though a small subset of messages are highly dependent on other messages (see Section \ref{sec:independence_check}).

In our specific dataset, this dependence arises because several parsers can be run with different options.  Running the same parser with different options sometimes results in nearly identical messages being produced.

The proper strategy from the perspective of a formal model would be to capture the dependence structure, but it is \emph{extremely} computationally infeasible to test for dependence (even pairwise dependence) whenever there are more than a few dozen messages.  When we incorrectly assume conditional independence across all messages---as we will blithely proceed to do---this artificially reduces the probability of certain patterns of messages below what is actually observed.  We can compensate for this effect by raising the overall message probability above what is estimated on a per-message basis.  This appears to result in a distribution of message patterns that agrees with the observations, though some of the multi-way dependence structure is lost.

Finally, we will assume that for each dialect that we wish to study, there are only two kinds of messages: those that occur frequently for that dialect, and those that occur at about the same frequency as for other dialects.  Although apparently limiting, our dataset agrees with this assumption (see Figure \ref{fig:goodA_mp}).  The less frequent kind of message is effectively a ``background'' message.  At the start of our analysis, we do not know which message plays which role for any given dialect.  

%%%%%%%%%%%%%%%%%%%%%%%%%%%%%%%%%%%%%%%%%%%%%%%%

\section{Dataset description}

The data we processed to test our methodology were produced as training data by the Test and Evaluation Team for the DARPA SafeDocs evaluation exercise 3.  These data consist of PDF files, ostensibly compliant with the ISO 32000-2 standard.
For this exercise, we used the ``Universe A'' \emph{good files} and \emph{bad files} datasets.  Each dataset consisted of $100001$ hand-curated PDF files, for a total of $200002$ files.

Given that the Test and Evaluation Team consists of PDF format experts, the Test and Evaluation Team was able to manually ensure that these two datasets had known ground truth: the good files are either syntactically and semantically valid PDF files, or are files that could be unambiguously corrected.  The bad files exhibit various kinds of malformations including syntax errors, semantic violations, or other kinds of problems.
The good files were largely sourced from Common Crawl \cite{common_crawl}, 
while the bad files were drawn from various sources: some were found in the wild (from Common Crawl), some were
malicious files created by the Test and Evaluation Team, and others were non-malicious non-compliant files created by the Test and Evaluation Team.

Each file was processed through $13$ distinct base parsers, run with various options to make a total of $29$ parsers.
A total of $\#M = 3022$ Boolean messages were collected, as shown in Table \ref{tab:parsers}.  One message per parser is an exit code corresponding to the presence of an error, which accounts for a total of $29$ messages.  The rest of the messages correspond to specific regular expressions (regexes) run against {\tt stderr}, as explained in Section \ref{sec:message_regex_construction}.  Several of these messages were found to play an important role in identifying dialects and are discussed in detail in latter sections of this paper.  The reader interested in seeing example regexes should consult Table \ref{tab:weighty_messages}.

\begin{table}
    \begin{center}
      \caption{Parsers and options used}
      \label{tab:parsers}
    \begin{tabular}{|l|l|c|}
      \hline
      Parser& Possible options&Messages\\
      \hline
      \hline
{\tt caradoc}&{\tt extract}&121\\
&{\tt stats}&121\\
&{\tt stats --strict}&94\\
\hline
{\tt hammer}& (none)&69\\
\hline
{\tt mutool}&{\tt clean}&214\\
&{\tt draw}&248\\
&{\tt show}&75\\
\hline
{\tt origami}&{\tt pdfcop}&40\\
\hline
{\tt pdfium}& (none)&26\\
\hline
{\tt pdfminer}&{\tt dumppdf}&88\\
&{\tt pdf2txt}&155\\
\hline
{\tt pdftk}&{\tt server}&33\\
\hline
{\tt pdftools}&{\tt pdfid}&4\\
&{\tt pdfparser}&30\\
\hline
{\tt peepdf}& (none)&4\\
\hline
{\tt poppler}&{\tt pdffonts}&100\\
&{\tt pdfinfo}&90\\
&{\tt pdftocairo}&214\\
&{\tt pdftoppm}&155\\
&{\tt pdftops}&189\\
&{\tt pdftotext}&139\\
\hline
{\tt qpdf}& (none)&192\\
\hline
{\tt verapdf}&{\tt greenfield}&40\\
&{\tt pdfbox}&50\\
\hline
{\tt xpdf}&{\tt pdffonts}&82\\
&{\tt pdfinfo}&70\\
&{\tt pdftoppm}&122\\
&{\tt pdftops}&157\\
&{\tt pdftotext}&100\\
\hline
\hline
Total&&3022\\
\hline
    \end{tabular}
  \end{center}
\end{table}

The input to the methodology described in this paper is therefore an unordered list of file-message pairs, recording the set of messages that occurred for each file.  These data can be rendered into a matrix form, in which the rows correspond to messages and the columns correspond to files.  The entries are either $1$ or $0$, if the message occurred or did not occur, respectively.  The matrices for both datasets are shown in Figure \ref{fig:uniA_matrices}.  Since the same set of messages was collected for both datasets, the rows have the same meaning in both matrices.  Even though the same number of files was present in each dataset, the meaning of the columns differs since the sets of files differ.

\begin{figure*}
  \begin{center}
    \includegraphics[width=5in]{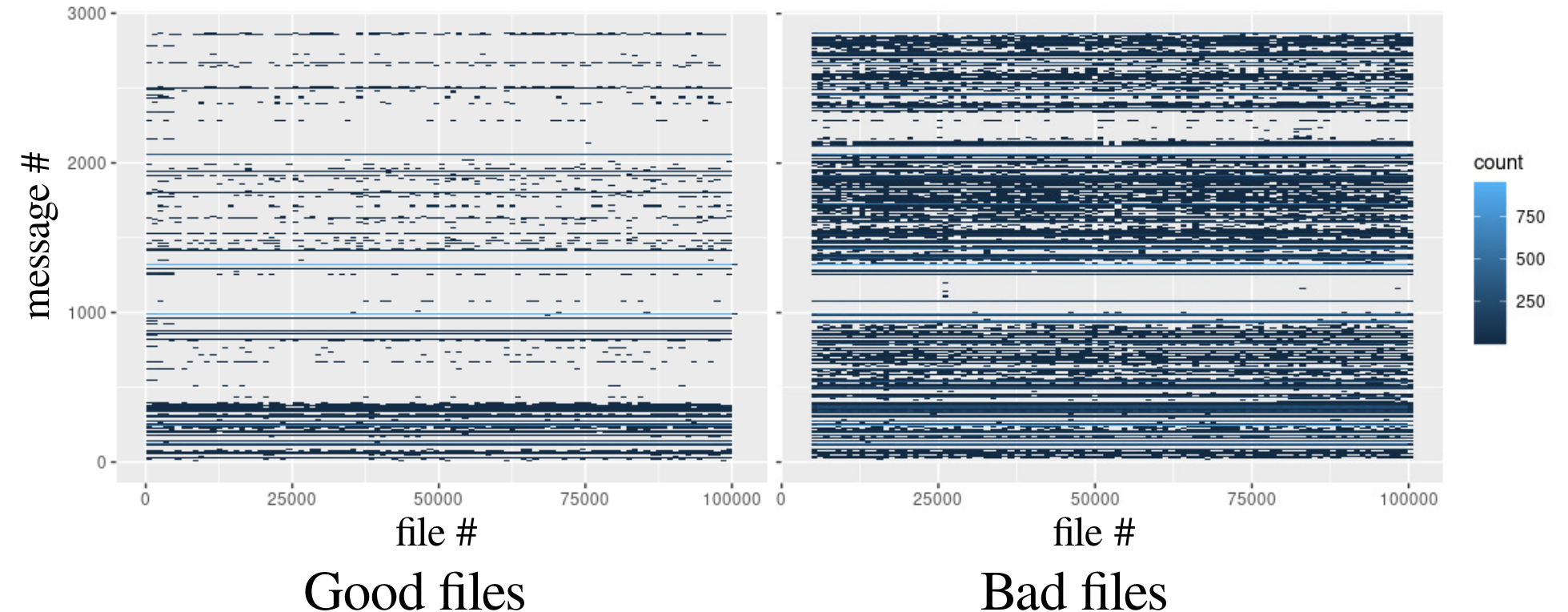}
    \caption{Matrices of messages (rows) versus files (columns) for the SafeDocs Evaluation 3 Universe A good files (left) and bad files (right).  The message numbers are arbitrary but are the same for both matrices.  The files differ between the two data sets.}
    \label{fig:uniA_matrices}
  \end{center}
\end{figure*}

It is immediately clear visually that the two matrices are quite different.  In particular, the bad files produce far more messages than the good files on average.  A rough classification of an unknown file from one of these two sets based on its message count can clearly be an effective strategy.  However, as will be shown in Section \ref{sec:posterior_performance}, our method outperforms this na\"ive strategy by a wide margin.  One can determine which specific message patterns correspond to different behaviors, yielding a finer classification.

\subsection{Message regex construction}
\label{sec:message_regex_construction}
The message regexes were generated by running all unique {\tt stderr} messages for each parser (independently in parallel) through a set of manually created find-and-replace rules, followed by a final multi/single line filter, as described below.  The idea is that message regexes should match to a message type or template, ignoring variable fields. For example, one of the caradoc stderr messages is {\tt PDF error : Error in Flate/Zlib stream in object }\emph{[number]}{\tt !}, and messages which fill in the template with different object numbers should all still be combined into the same regex: {\tt PDF error : Error in Flate/Zlib stream in object {\textbackslash}d+ !}

Each rule includes (1) a {\tt stderr}-file-wide regex, (2) a find regex, and (3) replace text.
If the rule's {\tt stderr}-file-wide regex matches the {\tt stderr} file produced by a parser,
then the rule's find regex is used to replace all \emph{its} matches with the rule's replace text to create a potential message regex.

Newlines are not automatically treated as a message delimiter, so the resulting regex set contains duplicates of single/multi line messages in different orders.
To compress these duplicates, a final multi/single line filter numbers all unique lines and repeatedly loops through the sequences to distinguish single line regexes from multi-line regexes.  To this end, it uses the known single lines to identify and split more single lines, until a final set of single lines and unsplittable multi-line message regexes is obtained for each parser.

\subsection{Preliminary test: Message independence}
\label{sec:independence_check}
To assess the message conditional independence assumption, we ran $\chi^2$ tests for independence on each pair of messages drawn from a simple random sample of $n=30$ messages for the good files and the bad files, separately.  Figure \ref{fig:uniA_chisq} shows a summary of the resulting $p$-values for each pair of messages from the sample for both the good (left) and bad (right) files.  A small $p$-value corresponds to a likely dependent pair of messages, while a large $p$-value corresponds to a pair of messages that are likely independent.  Figure \ref{fig:uniA_chisq} shows that most message pairs are likely independent (with $p>0.05$, though typically much larger).  There is a small subset of the message pairs that are very dependent (with $p<0.05$).  Given the banding structure, these dependencies are caused by a small number of individual messages.

\begin{figure*}
  \begin{center}
    \includegraphics[width=5in]{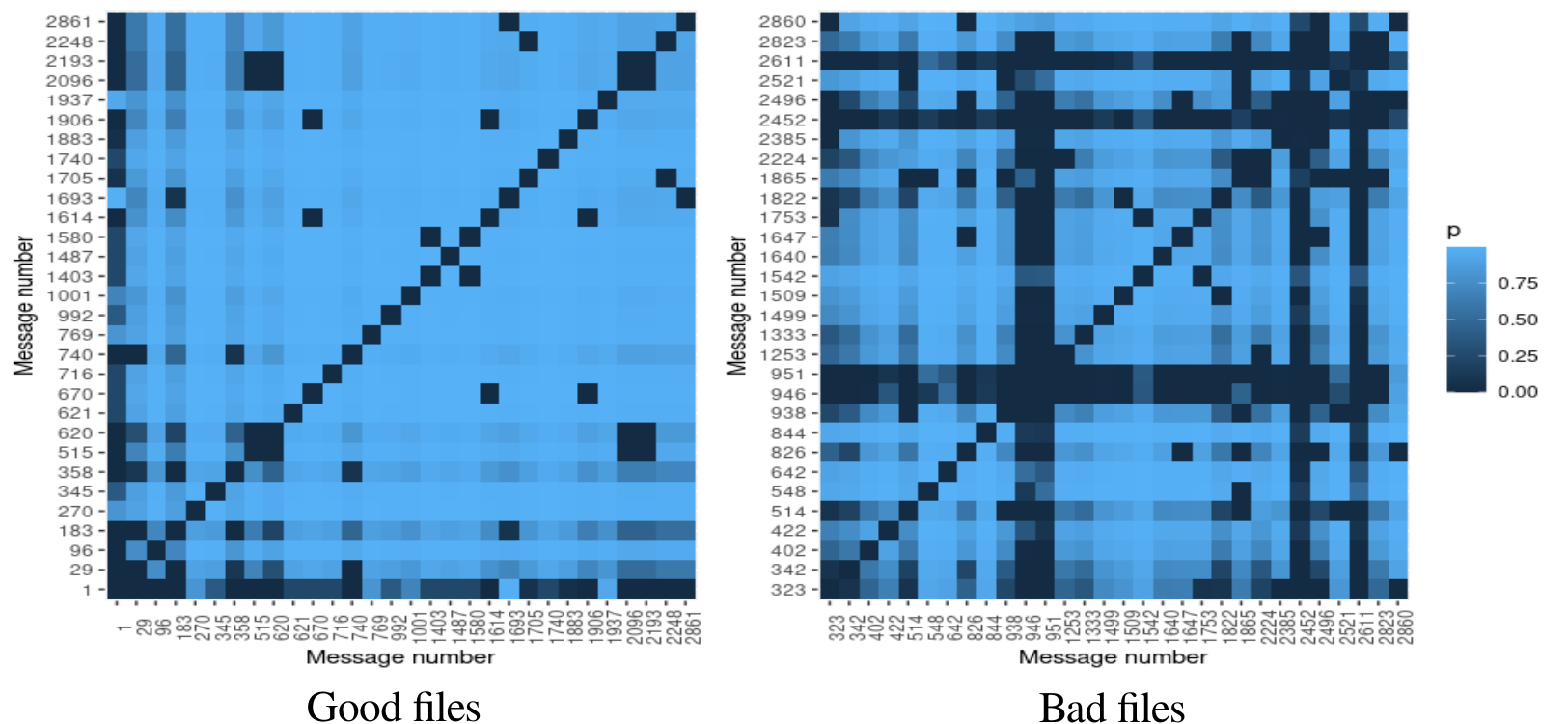}
    \caption{SafeDocs Evaluation 3 Universe A dataset pairwise $\chi^2$ test for independence between a random sample of $30$ messages. The $p$-values for each pair are shown for the good files (left) and bad files (right).}
    \label{fig:uniA_chisq}
  \end{center}
\end{figure*}

Some of the most significant message dependencies correspond to duplicate regexes run on the output of a given parser, when different options are enabled.  For instance, messages $19$ and $140$ in Table \ref{tab:weighty_messages} use the same regex for the output of the {\tt caradoc} parser.  Message $19$ is reported when {\tt caradoc} is run with the {\tt extract} option, while message $140$ is reported when {\tt caradoc} is run with the {\tt stats} option.  Given that running {\tt caradoc} with different options likely uses the same code in many places, is reasonable to expect---though it need not be the case that---a given file will produce both of these messages or neither of them.  

%%%%%%%%%%%%%%%%%%%%%%%%%%%%%%%%%%%%%%%%%%%%%%%%

\section{Methods}

Let $A$ and $B$ be two sets of files, corresponding to different dialects that we would like to classify.
That is, files in $A$ are of one dialect, while files in $B$ are of another dialect.
Each of these files are run through parsers that can potentially produce any messages from a fixed set $M$ of messages.
We will assume that the messages are independent as random variables, after they have been conditioned upon the dialect.
That is, if we are only considering files of dialect $A$, then the messages will be independent.
However, if we consider the files in two dialects $A \cup B$, then independence may be violated.

The independence assumption lets us consider the message probabilities for each dialect separately.
In the case of dialect $A$, we will model the messages as Bernoulli random variables with one of two probabilities: $p_0$ or $p_A$.
The subset of messages with probability $p_A$ is denoted as $M_A \subseteq M$.
The remainder of the messages (in $M_A^c = M - M_A$, the complement of $M_A$ within $M$) are assumed to occur with probability $p_0$.
If we assume that $p_A > p_0$, it is useful to interpret $M_A$ as the set of messages that are ``characteristic'' to dialect $A$.
We can similarly define a set $M_B$ of messages that occur with probability $p_B$ for files in dialect $B$.

We will use only one value for $p_0$ across both dialects, which suggests the interpretation that $p_0$ is the ``background'' message probability.
If message is in $M_A \cap M_B^c$, then it will either occur with probability $p_A$ if the file is in dialect $A$, or it will occur with probability $p_0$ if the file is in dialect $B$.
Conversely, a message in $M_A^c \cap M_B$ will occur with probability $p_B$ if the file is in dialect $B$, or it will occur with probability $p_0$ if the file is in dialect $A$.

Under the above assumptions, the probability of getting exactly a set of messages $K \subseteq M$ on a file $f$ in dialect $A$ is
\begin{equation}
  \label{eq:message_prob}
  \begin{aligned}
    P(K | A) =& p_0^{\#(K \cap M_A^c)} (1-p_0)^{\#(K^c \cap M_A^c)} \times    \\& p_A^{\#(K\cap M_A)} (1-p_A)^{\#(K^c\cap M_A)}.
    \end{aligned}
\end{equation}

We note that in our two datasets, the messages are not completely independent, especially for those messages that have a low probability of occurrence.  The proper probabilistic model should account for various correlations between messages, but is substantially more complicated than Equation \eqref{eq:message_prob}.  That said, by artificially inflating the probability $p_0$, one can produce similar message patterns to what are observed in the data.

\subsection{Estimating the parameters of the model for a single dialect}
\label{sec:estimation}

If $\#M_A \ll \#M$, then we can estimate $p_0$ from the data.  The probability that no messages will occur is then
\begin{equation}
    \label{eq:p0_estimator}
\begin{aligned}
  P(\emptyset | A) &= (1-p_0)^{\#M_A^c} (1-p_A)^{\#M_A}\\
  &= (1-p_0)^{\#M} \left(\frac{1-p_A}{1-p_0}\right)^{\#M_A}\\
  &\approx (1-p_0)^{\#M}.
\end{aligned}
\end{equation}

On the other hand if $p_A \gg p_0$ is a good assumption, then considering each message independently will identify those that are in $M_A$, since their individual probabilities differ from $p_0$ by a significant amount.

Although it turns out to be unnecessary in the case of our dataset, one could identify messages in $M_A$ by a standard hypothesis test for a proportion.  This is helpful if $p_A$ is close to $p_0$.  To that end, if the $t$-test statistic for message $k$,
\begin{equation*}
  t=\frac{p_k- p_0}{\sqrt{\frac{p_0(1-p_0)}{\#\text{files}}}}
\end{equation*}
is large---say larger than 1.96 for 95\% confidence---then we conclude that this message is an element of $M_A$.

%The following formula is for \emph{exactly} one message $m$ happening:
%\begin{equation*}
%  P(\{m\} | A) = \begin{cases}
%    \text{if } m\in M_A & (1-p_0)^{\#M-\#M_A} p_A (1-p_A)^{\#M_A-1}\\
%    \text{if } m\notin M_A & p_0(1-p_0)^{\#M-\#M_A-1} (1-p_A)^{\#M_A}\\
%    \end{cases}
%\end{equation*}
%thus the messages will have two distinct probabilities of occurance, depending on whether or not they are in $M_A$.
%The case of exactly one message probably does not typically happen very often, since the probabilities will be quite low.

\subsection{Message patterns and weighted Dowker complexes}
\label{sec:mp_and_dowker}

Given the set of messages $M$, there are $2^M$ possible \emph{message patterns} that might occur for a given a file.  Under the model given by Equation \eqref{eq:message_prob}, not all of these are equally likely.  Some message patterns can be expected to be quite common.  For instance, if the messages are all ``errors'' and the dialect under consideration consists of mostly valid files, we should expect the the empty pattern $\emptyset$ to be the most common.

The set of message patterns that occur for a given dataset has rich mathematical structure.  The most famous of these structures is that of the \emph{Dowker complex}.

\begin{definition} \cite{Robinson_2021,Ambrose_2020}
  The set $X$ of all message patterns $K \subseteq 2^M$ such that there is a file exhibiting (at least) the messages in $K$ is called the \emph{Dowker complex}.  Each such message pattern is called a \emph{Dower simplex}.  Furthermore, the number of files exhibiting \emph{exactly} the messages in $K$ \emph{and no others} is called the \emph{differential weight} $d(K)$.  For simplicity, we will usually call $d(K)$ the \emph{weight} or the \emph{file count} for the message pattern $K$.
\end{definition}

There are many interesting properties of the Dowker complex, because it is an example of an \emph{abstract simplicial complex}, a combinatorial topological model of an abstract space.  For our purposes in this article, the most important properties follow from the fact that message patterns are \emph{partially ordered} by \emph{subset inclusion}.  That is, if $K_1$, $K_2$, and $K_3$ are message patterns and we know that $K_1 \subseteq K_2$ and $K_2 \subseteq K_3$, then it follows that $K_1 \subseteq K_3$.  Moreover, if $K_1 \subseteq K_2$ and $K_2 \subseteq K_1$, then it follows that $K_1 = K_2$.  It is obvious that the message count for a message pattern $K$ (the number of messages in $K$) constrains which other message patterns are related to $K$.  This impacts the statistics of the distribution of message counts exhibited within a given dataset, as will be explained in Lemma \ref{lem:mc_sorting} of Section \ref{sec:message_counts}.

Since the number of messages is typically large, for instance $\# M = 3022$, the number of possible message patterns is truly enormous.  It is therefore unwise to attempt to compute the weight of all possible message patterns for a given dataset.  Since most of these weights will be zero, it is much better to compute the message patterns that are actually present and their corresponding weights simultaneously.  This can be done efficiently by lexically sorting the columns of the matrix form of the data, and then grouping blocks of identical columns greedily.  Each distinct column clearly corresponds to a particular message pattern with a nonzero weight, and the weight is simply the number of duplicate columns.

As a sample implementation of this greedy approach, we exhibit a very succinct implementation in the {\tt tidyverse} dialect of the R statistical programming language.  (A more optimized Python implementation, suitable for interactive exploration of our entire test data, is discussed in Section \ref{sec:dowker_display}.) This implementation assumes that the Boolean matrix of messages is stored as a data frame called {\tt message\_data}, in which the rows correspond to files.  (That is, the data frame is the transpose of the matrices shown in Figure \ref{fig:uniA_matrices}.)  The columns which correspond to messages are named starting with the character {\tt 'X'}, and all other columns are ignored.

The following snippet will compute a new data frame called {\tt dowker} containing the Dowker complex and weights (file counts), and then produces a histogram of the weights, like what is shown in Figure \ref{fig:mc_vs_weight}. Because the {\tt count()} function is quite efficient, the snippet runs quickly provided that the data fit in memory.

\begin{lstlisting}[language=R]
dowker <- message_data %>%
  group_by(across(starts_with('X'))) %>%
  count(name='weight',sort=TRUE) %>%
  ungroup()
dowker %>% 
  mutate(simplex=row_number()) %>%  
  ggplot(aes(x=simplex,
             y=sort(weight,
                    decreasing=TRUE))) + 
  geom_line()
\end{lstlisting}

\subsection{Relationship with message count}
\label{sec:message_counts}

If the messages mostly correspond to error conditions and most files in a given dialect are valid, then we expect that most files will generate few messages.  Using the model given by Equation \eqref{eq:message_prob}, the probability that a file will produce $n$ messages is the weighted sum of binomial distributions,
\begin{equation*}
  \begin{aligned}
    P(\#K = n | A) = \sum_{k = 0}^{n} & \left[\binom{\#M_A}{k}\binom{\#M - \#M_A}{n-k}p_0^{n-k} \times \right.\\ & (1-p_0)^{\#M-\#M_A-(n-k)} \times \\ &\left.p_A^{k}(1-p_A)^{\#M_A-k}\right]
    \end{aligned}
\end{equation*}

\begin{figure}
  \begin{center}
    \includegraphics[width=3in]{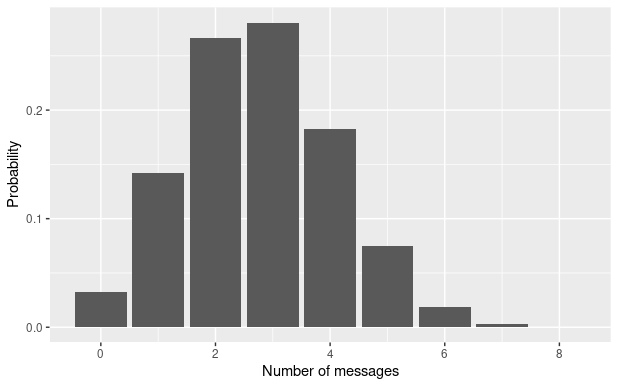}
    \caption{Probability that a file will produce a certain number of messages, given a total of $8$ messages: $3$ messages with probability $0.25$, and $5$ messages with probability $0.4$.}
    \label{fig:theoretical_mc}
  \end{center}
\end{figure}

After a bit of algebra (or logic), if $p_A=p_0$, then the probability of $n$ messages occurring is simply given by a binomial distribution.
Since the binomial distribution is not necessarily monotonic, neither is message count, for instance see Figure \ref{fig:theoretical_mc}.
Nevertheless, there is a definite dependence between message count and the probability of a given message pattern.

\begin{lemma}
  \label{lem:mc_sorting}
  Assume that each message has probability less than $0.5$ and the messages are independent when conditioned on files of dialect $A$.  If $K_1 \subset K_2$ are two message sets, so that $\#K_1 < \#K_2$, then $P(K_2 | A) < P(K_1|A)$.
\end{lemma}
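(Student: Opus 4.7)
The plan is to expand the ratio $P(K_2 \mid A)/P(K_1 \mid A)$ using Equation \eqref{eq:message_prob} and show that it is strictly less than one. Since $K_1 \subsetneq K_2$, the set $D := K_2 \setminus K_1$ of ``added'' messages is nonempty. Every message outside $D$ contributes the same factor to both $P(K_1 \mid A)$ and $P(K_2 \mid A)$ (either it lies in both $K_1$ and $K_2$, or in neither), so these factors cancel from the ratio. Each message $m \in D$, on the other hand, lies in $K_1^c \cap K_2$; it contributes a factor of $1 - p_A$ or $1 - p_0$ to $P(K_1 \mid A)$ (according to whether $m \in M_A$ or $m \in M_A^c$) and the corresponding factor of $p_A$ or $p_0$ to $P(K_2 \mid A)$.

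After the cancellation, the ratio collapses to
\[
\frac{P(K_2 \mid A)}{P(K_1 \mid A)} = \left(\frac{p_A}{1-p_A}\right)^{\#(D \cap M_A)} \left(\frac{p_0}{1-p_0}\right)^{\#(D \cap M_A^c)}.
\]
The remaining step is the observation that the map $p \mapsto p/(1-p)$ is strictly less than one for every $p < 1/2$. By hypothesis both $p_A < 1/2$ and $p_0 < 1/2$, so each of the two bases on the right is strictly below one. Because $D$ is nonempty, at least one of the two exponents $\#(D \cap M_A)$ and $\#(D \cap M_A^c)$ is positive, hence the full product is strictly below one. This yields $P(K_2 \mid A) < P(K_1 \mid A)$.

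I do not anticipate a serious obstacle; the argument is essentially a bookkeeping exercise layered on top of Equation \eqref{eq:message_prob} plus a single monotonicity fact. The only point that requires care is preserving strict inequalities: the strict containment $K_1 \subsetneq K_2$ guarantees that $D$ is nonempty, and the strict bound $p < 1/2$ guarantees that each individual factor is strictly less than one, and the two combine to give the strict conclusion stated in the lemma.
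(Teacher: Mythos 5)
Your proposal is correct and is essentially the paper's own argument written out in full: the paper's one-line proof observes that passing from $K_1$ to $K_2$ swaps factors of the form $(1-p_k)$ for factors $p_k$, each strictly smaller when $p_k < 1/2$, which is exactly what your ratio computation formalizes (with the minor difference that the paper's phrasing also covers distinct per-message probabilities, while you specialize to the two-level model of Equation \eqref{eq:message_prob}, which suffices for the lemma as stated). Your attention to strictness---nonempty $D = K_2 \setminus K_1$ and the strict bound $p/(1-p) < 1$ for $p < 1/2$---is sound and matches the intended argument.
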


\begin{proof}
  By moving from $K_1$ to $K_2$, we are merely swapping out factors in $P(K|A)$ of the form $(1-p_k)$ for corresponding factors $p_k$.  Under the hypothesis, these new factors are smaller.  
\end{proof}

If messages occur with probability greater than $0.5$, it is usually more informative to consider their \emph{absence} instead of their presence.  For instance, {\tt pdfium} usually produced a message \verb;Processed \d+ pages\.;.  The absence of this message suggests that the parser exited without producing any useful output because the file was too malformed, but its presence was not very helpful.

\begin{figure}
  \begin{center}
    \includegraphics[width=3.25in]{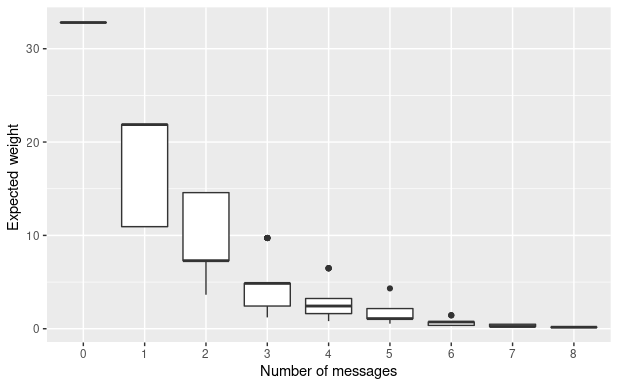}
    \caption{Expected weight (file count) of a simplex compared with its message count, given a total of $8$ messages: $3$ messages with probability $0.25$, and $5$ messages with probability $0.4$.}
    \label{fig:mc_vs_weight}
  \end{center}
\end{figure}

Note that the Lemma is clearly true for Equation \eqref{eq:message_prob}, though it holds even if every message probability is different.
Inspired by the binomial distribution, we expect the weight to eventually decrease as number of messages increases, as shown in Figure \ref{fig:mc_vs_weight}.  The variability in weights in Figure \ref{fig:mc_vs_weight} is a bit misleading, because as messages are added the weight must decrease.

\begin{corollary}
  \label{cor:inconsistent_edges}
  Under the same conditions as Lemma \ref{lem:mc_sorting}, if $K_1$ and $K_2$ are two Dowker simplices, with $K_1 \subset K_2$ (so that $K_2$ consists of more messages happening), then the expected weights satisfy $d(K_1 ) > d(K_2)$.
\end{corollary}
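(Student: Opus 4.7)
The plan is to observe that this corollary is essentially an immediate consequence of Lemma \ref{lem:mc_sorting} combined with linearity of expectation. Under the standing model, each of the $N$ files in dialect $A$ has its message pattern drawn independently from the product Bernoulli distribution governed by Equation \eqref{eq:message_prob}. Therefore the differential weight $d(K)$ is a sum of $N$ independent indicator random variables, one per file, where the $i$-th indicator is $1$ exactly when file $i$ produces message pattern $K$ and nothing else.

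First, I would record the identity $E[d(K)] = N \cdot P(K|A)$ for any message set $K$, which follows directly from the definition of $d$ and linearity of expectation applied to the indicators above. Next, I would invoke Lemma \ref{lem:mc_sorting} with the two message sets $K_1 \subset K_2$ under the hypothesis that all message probabilities are strictly less than $0.5$. That lemma yields the strict inequality $P(K_2|A) < P(K_1|A)$, because passing from $K_1$ to $K_2$ replaces at least one factor of the form $(1-p_k)$ with a factor $p_k < 1 - p_k$. Multiplying both sides by the positive constant $N$ then gives $E[d(K_1)] > E[d(K_2)]$, as claimed.

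The only subtlety worth flagging is the requirement that both $K_1$ and $K_2$ be Dowker simplices, i.e., that some file in the sample actually witnesses each of them. This hypothesis is not needed to get the inequality between expected weights from Lemma \ref{lem:mc_sorting}; it is a property of the observed complex, not of the underlying distribution. So the corollary is really a statement purely about the expected values under the independence model, and the Dowker-simplex hypothesis merely anchors the statement to the context of an observed dataset.

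I do not anticipate any real obstacle here. The only place where care is needed is being explicit that ``expected weight'' means the expectation of $d(K)$ taken over random draws of the $N$ files from the dialect-$A$ distribution, so that the reduction to Lemma \ref{lem:mc_sorting} is literally one step. If one wanted a slightly stronger statement (e.g., that $d(K_1) > d(K_2)$ holds with high probability rather than merely in expectation), one would need concentration inequalities, but the corollary as stated is just the expectation-level consequence and needs nothing beyond linearity.
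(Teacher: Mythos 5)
Your proposal is correct and matches the paper's reasoning: the paper gives no separate proof for Corollary \ref{cor:inconsistent_edges}, treating it as immediate from Lemma \ref{lem:mc_sorting} exactly via the expectation identity $E[d(K)] = N \cdot P(K|A)$ that you make explicit. Your remark that the Dowker-simplex hypothesis plays no role in the expectation-level inequality is a fair and accurate clarification, not a deviation.
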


That is, patterns with a higher message count are typically exhibited by fewer files.  Note that this does not mean that the weight decreases with increasing message count; only that it decreases \emph{as additional messages are considered}.  As a result, it is particularly interesting when this trend is not followed in the actual data: the weight \emph{increases} as additional messages are added.  These violations of Corollary \ref{cor:inconsistent_edges} typically occur when messages are strongly dependent upon one another, and are effectively measuring related phenomena.

\subsection{Distribution of weights}

Observe that for a given pattern of messages $K$ for a single dialect $A$, Equation \eqref{eq:message_prob} gives the probability that a particular file will count towards the weight for $K$, when $K$ is thought of as a simplex of the Dowker complex.

A convenient display of the weights for a given set of message patterns is to sort them in decreasing order, resulting in a \emph{Dowker histogram}.  Equation \eqref{eq:message_prob} specifies the expected values of each possible weight, but does not specify how many simplices will have this particular weight.  This is easily found, however.  There are
\begin{equation*}
  \binom{\# M_A}{k} \binom{\#M - \# M_A}{n-k}
\end{equation*}
different simplices corresponding to the situation where $n$ messages occurred, $k$ of which are characteristic to the dialect $A$.  Each of these simplices has the same expected weight, namely
\begin{equation}
  \label{eq:dowker_histogram}
  \begin{aligned}
  P(\#K &= n, \#(K\cap M_A) = k | A) = \\
 & p_0^{n-k} (1-p_0)^{(\#M - \# M_A - (n-k))} p_A^{k} (1-p_A)^{\#M_A - k}.
  \end{aligned}
\end{equation}

\begin{example}
  \label{eg:simulated_expected_dowker}
  Convergence of the actual weights to the expected weight computed by Equation \eqref{eq:dowker_histogram} is quite rapid if $p_0$ and $p_A$ are not close to $0.5$.  Consider a dataset with $8$ messages collected from $1000$ files, in which $5$ of the messages occur with probability $0.4$ and the remaining $3$ messages occur with probability $0.25$.  Such a dataset is much smaller than the dataset we ultimately considered!  To assess the variability in the resulting Dowker histogram, we simulated $300$ cases of this kind of dataset.  The resulting histograms are shown (in aggregate) in blue in Figure \ref{fig:expected_dowker}, over which the expected weights are plotted in red.  There is quite close agreement between the simulated and expected histograms.  
\end{example}

\begin{figure}
  \begin{center}
    \includegraphics[width=3.5in]{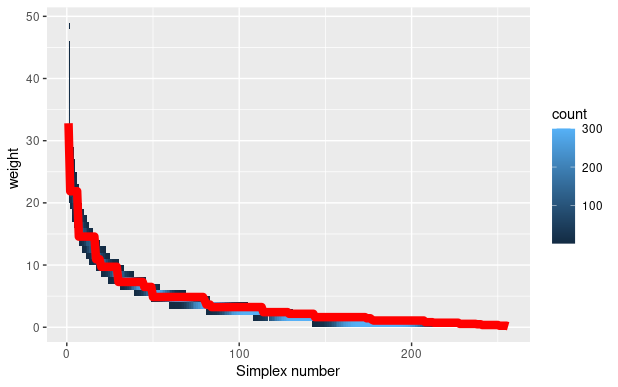}
    \caption{Histogram of expected Dowker weights (red) versus $300$ random trials (blue/gray) of $1000$ files, with a total of $8$ messages: $3$ messages with probability $0.25$, and $5$ messages with probability $0.4$.  Each point along the horizontal axis specifies a simplex (ordered so that their corresponding weights decrease).}
    \label{fig:expected_dowker}
  \end{center}
\end{figure}

\subsection{File ratios for two dialects}

Suppose that the dataset contains files from two dialects, $A$ and $B$.
Although files from both dialects might exhibit a given message pattern, this pattern may occur more frequently for files from one dialect.
Another way to interpret this situation is that files of one dialect may be more prevalent on certain Dowker simplices than on others.  If the distribution of files across the simplices differs, then it is possible to separate (some portion of) the dialects.  There can be simplices where the dialects overlap, namely certain patterns of messages that are exhibited with roughly equal probabilities by both dialects. These files cannot be separated using message patterns, and are of potential interest to file format analysts.

The ratio of dialect $B$ files to dialect $A$ files for a message pattern $K$ is
\begin{equation*}
  \frac{P(K|B) (\# B\text{ files})}{P(K|A) (\#A\text{ files})} = \left(\frac{P(K|B)}{P(K|A)}\right)\left(\frac{P(B)}{P(A)}\right)\left(\#\text{files}\right).
\end{equation*}
Notice that the ratio of conditional probabilities is the only dependence on the simplex $K$.  This ratio can be estimated using Equation \eqref{eq:message_prob}.

\begin{lemma}
  \label{lem:file_ratio}
  If $M_A \cap M_B = \emptyset$, then the ratio of conditional probabilities for $B$ files to $A$ files in simplex $K$ is expected to be
  \begin{equation*}
    \begin{aligned}
      \frac{P(K|B)}{P(K|A)} =& \left[\left(\frac{p_0}{p_A}\right)\left(\frac{1-p_A}{1-p_0}\right)\right]^{\#(K\cap M_A)} \times \\
      & \left[\left(\frac{p_B}{p_0}\right)\left(\frac{1-p_0}{1-p_B}\right)\right]^{\#(K\cap M_B)} \times \\
    &\left(\frac{1-p_0}{1-p_A}\right)^{\#M_A}
    \left(\frac{1-p_B}{1-p_0}\right)^{\#M_B}.
    \end{aligned}
  \end{equation*}
\end{lemma}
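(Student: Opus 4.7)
The plan is to apply Equation \eqref{eq:message_prob} to compute $P(K|A)$ and $P(K|B)$ separately, form their ratio, cancel common factors, and then regroup the surviving factors by the four exponents appearing in the claim. The hypothesis $M_A \cap M_B = \emptyset$ gives the disjoint decomposition $M = M_A \sqcup M_B \sqcup (M_A^c \cap M_B^c)$, so in particular $M_B \subseteq M_A^c$ and $M_A \subseteq M_B^c$. Setting $a = \#(K \cap M_A)$, $b = \#(K \cap M_B)$, $r = \#(K \cap M_A^c \cap M_B^c)$, and $s = \#(M_A^c \cap M_B^c) - r$, I would first rewrite the four exponents in Equation \eqref{eq:message_prob} for dialect $A$ using $\#(K \cap M_A^c) = b + r$ and $\#(K^c \cap M_A^c) = (\#M_B - b) + s$, and analogously for dialect $B$ by interchanging the roles of $a$ and $b$ (and of $M_A$ and $M_B$).

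The key simplification, and the whole reason for assuming $M_A \cap M_B = \emptyset$, is that messages in $M_A^c \cap M_B^c$ contribute a factor $p_0^{r}(1-p_0)^{s}$ to both $P(K|A)$ and $P(K|B)$; these ``background-under-both-dialects'' contributions are identical in the numerator and denominator of the ratio and so cancel exactly. Without this disjointness hypothesis, a message in $M_A \cap M_B$ would contribute $p_A$ rather than $p_0$ under dialect $A$ but $p_B$ rather than $p_0$ under dialect $B$, and the closed-form simplification stated in the lemma would not go through.

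What remains after cancellation is a product of powers of $p_0$, $1-p_0$, $p_A$, $1-p_A$, $p_B$, and $1-p_B$ whose exponents are linear combinations of $a$, $b$, $\#M_A$, and $\#M_B$. The final step is purely algebraic regrouping: collecting the factors tied to $a$ yields $\bigl[(p_0/p_A)((1-p_A)/(1-p_0))\bigr]^{\#(K \cap M_A)}$; those tied to $b$ yield $\bigl[(p_B/p_0)((1-p_0)/(1-p_B))\bigr]^{\#(K \cap M_B)}$; and the remaining $a$- and $b$-independent pieces collapse to $((1-p_0)/(1-p_A))^{\#M_A}((1-p_B)/(1-p_0))^{\#M_B}$, matching the claim exactly.

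The main obstacle is purely bookkeeping: splitting each surviving $(1-p_0)$ power correctly among the $a$-dependent, the $b$-dependent, and the constant factors, and being careful that $\#(K^c \cap M_A) = \#M_A - a$ and $\#(K^c \cap M_B) = \#M_B - b$ are expanded and re-collected without double-counting. A slightly cleaner variant would observe that Equation \eqref{eq:message_prob} factors as a product of per-message Bernoulli contributions and then compute $P(K|B)/P(K|A)$ one message at a time, partitioned by membership in $M_A$, $M_B$, or $M_A^c \cap M_B^c$ and by presence or absence in $K$; the per-message ratios then immediately assemble into the stated formula.
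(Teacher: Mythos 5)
Your proposal is correct and takes essentially the same route as the paper's proof: both exploit the disjoint decomposition $M = M_A \sqcup M_B \sqcup (M_A^c \cap M_B^c)$ (the paper phrases it through Boolean identities such as $K\cap M_A^c = (K\cap M_A^c\cap M_B^c) \cup (K\cap M_A^c\cap M_B)$), cancel the shared background contribution $p_0^{r}(1-p_0)^{s}$ exactly as you describe, and finish by eliminating complements via $\#(K^c\cap M_A)=\#M_A-\#(K\cap M_A)$ and $\#(K^c\cap M_B)=\#M_B-\#(K\cap M_B)$. The only difference is order of operations --- you simplify each conditional probability before forming the ratio, while the paper forms the ratio first and simplifies the exponent differences --- and your per-message Bernoulli variant is just a streamlined repackaging of the same computation.
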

If the conditional independence assumption is violated, the ratio of conditional probabilities can still be computed.  In this case, it is usually called a \emph{pseudolikelihood ratio} \cite{arnold1991pseudolikelihood}.

\begin{proof}
  This is an elaborate calculation following from Equation \eqref{eq:message_prob}, the main goal of which is to eliminate the complements where they appear on each of $K$, $M_A$ and $M_B$.
  The following Boolean algebraic identities help to simplify the work:
  \begin{align*}
    K\cap M_A^c &= (K\cap M_A^c\cap M_B^c) \cup (K \cap M_A^c \cap M_B) \\
    K\cap M_B^c &= (K\cap M_A^c\cap M_B^c) \cup (K \cap M_A \cap M_B^c) \\
    K^c\cap M_A^c &= (K^c\cap M_A^c \cap M_B^c) \cup (K^c \cap M_A^c\cap M_B) \\
    K^c\cap M_B^c &= (K^c\cap M_A^c \cap M_B^c) \cup (K^c \cap M_A \cap M_B^c) 
  \end{align*}
  Specifically, the first two identities, followed by an application of the disjointness $M_A \cap M_B = \emptyset$, establishes that
  \begin{align*}
    \frac{p_0^{\#(K\cap M_B^c)}}{p_0^{\#(K\cap M_A^c)}} &= p_0^{\#(K \cap M_A \cap M_B^c) - \#(K \cap M_A^c \cap M_B)}\\
      &= p_0^{\#(K\cap M_A) - \#(K\cap M_B)}.
  \end{align*}
  In a similar way, we can derive that
  \begin{align*}
    \frac{(1-p_0)^{\#(K^c\cap M_B^c)}}{(1-p_0)^{\#(K^c\cap M_A^c)}} &= (1-p_0)^{\#(K^c \cap M_A \cap M_B^c) - \#(K^c \cap M_A^c \cap M_B)}\\
      &= (1-p_0)^{\#(K^c\cap M_A) - \#(K^c\cap M_B)}.
  \end{align*}
  Reorganizing yields the ratio
  \begin{equation*}
  \begin{aligned}
    \frac{P(K|B)}{P(K|A)} =& \left(\frac{p_0}{p_A}\right)^{\#(K\cap M_A)}
    \left(\frac{p_B}{p_0}\right)^{\#(K\cap M_B)} \times\\ &
    \left(\frac{(1-p_0)}{(1-p_A)}\right)^{\#(K^c\cap M_A)}
    \left(\frac{(1-p_B)}{(1-p_0)}\right)^{\#(K^c\cap M_B)}.
  \end{aligned}
  \end{equation*}
  To remove the remaining complements on the $K$, observe that
  \begin{align*}
    \#(K^c\cap M_A) &= \#M_A - \#(K\cap M_A)\\
    \#(K^c\cap M_B) &= \#M_B - \#(K\cap M_B)
  \end{align*}
  from which the desired result follows.
\end{proof}

Under the assumption that both $p_A$ and $p_B$ are greater than $p_0$, the first factor in the statement of Lemma \ref{lem:file_ratio} is less than $1$, while the second is greater than $1$.  Therefore, the ratio of dialect $B$ to dialect $A$ files in the simplex corresponding to $K$ is increased by ensuring that the messages in $K$ contain all of $M_B$ and none of $M_A$.  Messages outside $M_A \cup M_B$ do not impact the ratio of files in $K$ at all.  This is sensible: if one wishes to collect mostly dialect $B$ files, one looks for those that produce any messages in $M_B$ but none in $M_A$.

Conversely, places where the file ratio is close to $1$ are message patterns that are ambiguous.
Format analysts should spend more time on that particular set of files, since it is hard to disposition the files as clearly one dialect or the other without inspecting the file contents directly.
One could imagine that it might also be possible to craft new messages that discriminate between the dialects better by considering \emph{only} the files exhibiting these ambiguous message patterns.  This may greatly reduce the number of files that need to be considered.

\begin{corollary}
  \label{cor:file_ratio}
  If $M_A \cap M_B = \emptyset$ and $p=p_A=p_B$, then the ratio of $B$ files to $A$ files in simplex $K$ is expected to be
  \begin{equation*}
    \begin{aligned}
      \frac{P(K|B)}{P(K|A)} =
      & \left[\left(\frac{p}{p_0}\right)\left(\frac{1-p_0}{1-p}\right)\right]^{\#(K\cap M_B) - \#(K \cap M_A)} \times \\
    &\left(\frac{1-p}{1-p_0}\right)^{\#M_B-\#M_A}.
    \end{aligned}
  \end{equation*}
\end{corollary}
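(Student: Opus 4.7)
The plan is to obtain this corollary as a direct specialization of Lemma \ref{lem:file_ratio}. Since the hypotheses of the lemma (disjointness $M_A \cap M_B = \emptyset$) already hold, the only remaining work is to substitute $p_A = p_B = p$ into the three-factor formula and consolidate terms. I expect the entire argument to reduce to algebra; there is no new probabilistic content.

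First I would rewrite the first bracketed factor from the lemma, $\left[\left(\frac{p_0}{p_A}\right)\left(\frac{1-p_A}{1-p_0}\right)\right]^{\#(K\cap M_A)}$, in the form $\left[\left(\frac{p}{p_0}\right)\left(\frac{1-p_0}{1-p}\right)\right]^{-\#(K\cap M_A)}$ after substituting $p_A = p$. This puts it into the same base as the second bracketed factor (with $p_B = p$), so the two combine as a single power $\left[\left(\frac{p}{p_0}\right)\left(\frac{1-p_0}{1-p}\right)\right]^{\#(K\cap M_B) - \#(K\cap M_A)}$, which is exactly the first factor claimed in the corollary.

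Next I would handle the remaining factor $\left(\frac{1-p_0}{1-p_A}\right)^{\#M_A}\left(\frac{1-p_B}{1-p_0}\right)^{\#M_B}$. Substituting $p_A = p_B = p$ puts both pieces in the base $\frac{1-p}{1-p_0}$ (inverted in one case), yielding $\left(\frac{1-p}{1-p_0}\right)^{\#M_B - \#M_A}$, which matches the second factor in the corollary.

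The whole exercise is bookkeeping, so I do not anticipate a genuine obstacle. The one place I would be careful is sign conventions on the exponents: the first bracketed factor in the lemma uses $p_0/p_A$ (the reciprocal of the base appearing in the corollary), so its exponent must be negated before merging with the $M_B$ term. Once the two exponents are lined up in the same base, collecting them into a single power of $\#(K\cap M_B) - \#(K\cap M_A)$ and a single power of $\#M_B - \#M_A$ completes the proof.
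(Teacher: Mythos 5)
Your proposal is correct and is exactly the intended argument: the paper presents Corollary \ref{cor:file_ratio} without a written proof precisely because it follows from Lemma \ref{lem:file_ratio} by the substitution $p_A=p_B=p$ and the exponent consolidation you describe. Your care with the sign flip on the $\#(K\cap M_A)$ exponent (since the lemma's first factor uses the reciprocal base $p_0/p_A$) is the only nontrivial bookkeeping step, and you handle it correctly.
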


Corollary \ref{cor:file_ratio} is straightforward to apply and gives fine-grained information about where to look for files of a certain dialect.

\begin{example}
  \label{eg:lattice_example}
    Consider a notional dataset containing files from two dialects $A$ and $B$.  Suppose that there are three messages in total with $\#M_A = \#M_B=1$, and $M_A \cap M_B = \emptyset$.  If we let $p=p_A=p_B=0.4$, $p_0=0.2$, this dataset will satisfy the hypotheses of Corollary \ref{cor:file_ratio}.

  For three messages, there are $8$ possible message patterns.  These message patterns can be organized in a lattice based upon incrementally adding new messages, as is shown in Figure \ref{fig:lattice_example}.  In the figure, the message patterns are indicated by three squares: a filled square indicates that the corresponding message occurred, while an empty one indicates that the message is absent.  Message count increases as one moves up in the diagram: the bottom row corresponds to files exhibiting no messages, while the top row corresponds to files exhibiting all three messages.  

  Corollary \ref{cor:inconsistent_edges} asserts that because $p, p_0 < 0.5$, the per-dialect weights (shown as areas in the pie charts in Figure \ref{fig:lattice_example}) must decrease as one follows the gray arrows upwards in the diagram (adding messages incrementally).  Corollary \ref{cor:file_ratio} was used to compute the expected file ratio for each message pattern.
  
  The largest weight is on the bottom of the diagram, where the files exhibit no messages.  Since the ratio is $1$ for that message pattern (the pie chart shows equal areas for both dialects), it is not possible to separate dialects for the files that exhibit no messages.  One might imagine that there are other features that might be informative, but that these are simply not captured by the three messages under consideration.

  The next largest weight is in the second row, where just the messages in $M_A$ or $M_B$ occur.  (If $p>0.5$ instead, the largest weights can occur on rows other than the bottom, because the hypotheses of Corollary \ref{cor:inconsistent_edges} are not satisfied.)
  
  Considering the pie charts, the majority of the dialect $A$ files are on the left of the diagram, while the majority of the dialect $B$ files are on the right of the diagram.  Therefore, coarse dialect separation can be done on those files on the left or right, on account of their message patterns being different.
\end{example}

\begin{figure}
  \begin{center}
    \includegraphics[width=2.75in]{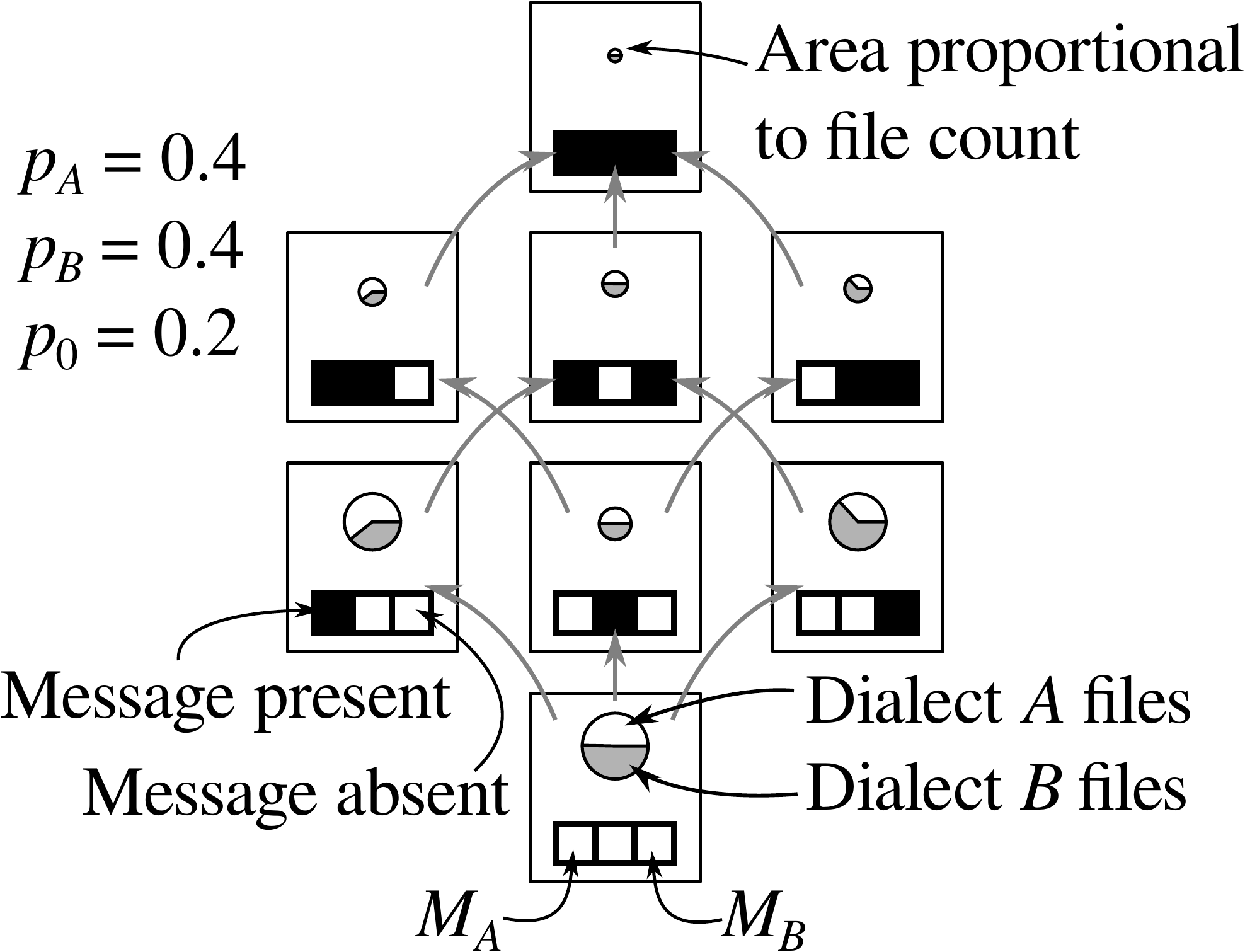}
    \caption{Expected weight and file ratio for two dialects using three messages as described in Example \ref{eg:lattice_example}.}
    \label{fig:lattice_example}
  \end{center}
\end{figure}

\subsection{Separating dialects by thresholding posterior probabilities}

Of course, the disjointness of $M_A$ and $M_B$ required by Lemma \ref{lem:file_ratio} might not hold exactly in practice.
Messages that lie in the intersection of $M_A$ and $M_B$ will not impact the ratio of files on any simplex if $p_A \approx p_B$, as is also easily seen in Corollary \ref{cor:file_ratio}.
However, if $p_A$ and $p_B$ differ substantially, this will tend to change the ratio of files from the estimate in Lemma \ref{lem:file_ratio} for any simplex that contains any messages in $M_A \cap M_B$.
Furthermore, if there are not many $B$ files, it may be difficult to estimate $p_B$ or the true contents of $M_B$.

Observe that files in the tail of the Dowker histogram are also in the tail of the message count histogram.  These are instances where $P(K|A)$ or $P(\#K=n|A)$ is low.  On the other hand, if $P(K|B)$ or $P(\#K=n|A)$ is comparatively higher, this will cause the ratio of dialect $B$ files to be statistically significantly higher, and thereby possible to detect.

A systematic way to exploit this information is to consider the probability that a given file is from a certain dialect, given that it exhibits a certain message pattern.  This is called the \emph{posterior probability}, and can be computed using Bayes' theorem.  For instance, to ascertain the probability that a file is in dialect $A$ given that it produced message pattern $K$, this is given by
\begin{equation}
  \label{eq:posterior_prob}
  P(A|K) = P(K|A) \frac{P(A)}{P(K)}.
\end{equation}
This formula defines a test statistic---a quantity that yields a classifier for dialect $A$ files upon thresholding $P(A|K)$.  Files with large $P(A|K)$ are likely from dialect $A$, while files with a lower value of $P(A|K)$ are less likely to be from dialect $A$.

In order to compute $P(A|K)$, one needs to obtain each factor in Equation \eqref{eq:posterior_prob}.  $P(K)$ is easily assessed by computing the frequency of the message pattern $K$ in the dataset at hand.  $P(K|A)$ is given by Equation \eqref{eq:message_prob}, subject to the assumptions mentioned earlier in the paper.  Moreover, if one has a training set consisting (almost) entirely of dialect $A$ files, then one can estimate $P(K|A)$ by simple counting.  This avoids the potential issues with violations of the independence of messages.

Finally, $P(A)$ is the expected probability that a file will be of dialect $A$, given no further information.  This last factor is the most difficult to estimate, and can be best thought of as a ``risk factor'': choosing a larger value of $P(A)$ will result in a higher estimate for $P(A|K)$, while choosing a lower value of $P(A)$ will consequently reduce the estimate for $P(A|K)$.  Thus, if dialect $A$ files are dangerous, it is wise to overestimate $P(A)$.  Conversely, if $A$ files are likely benign, an underestimate of $P(A)$ will produce fewer false alarms.

%%%%%%%%%%%%%%%%%%%%%%%%%%%%%%%%%%%%%%%%%%%%%%%%

\section{Results}

Figure \ref{fig:goodA_mp} shows the message probabilities for the SafeDocs Evaluation 3 Universe A good files dataset.  Some messages occur more than 50\% of the time; for these messages, it is more useful to assume that their \emph{absence} is an informative event.  Therefore, as a preprocessing step, when computing the probability of a message $k$, if its probability is $p_k > 0.5$, we instead use $1-p_k$ in what follows.

The vast majority of files have a low probability of occurrence, while there are roughly $6$ messages with an elevated probability (note the logarithmic scale).  Using the model we propose is tantamount to discretizing the probabilities shown to two separate levels: a higher one for the first $6$ messages, and a lower one for the rest of the messages.  The average probability for each of the first $6$ messages is $p_{\text{good}} = 0.380$.  The specific messages this threshold selects are shown in Table \ref{tab:goodA_mA}, so a reasonable cutoff for the probability of an $M_{\text{good}}$ message is $p_{\text{good}}>0.25$.  (Note that Table \ref{tab:goodA_mA} shows the raw probabilities of the messages, which may exceed $0.5$.)  The regexes for these messages appear in Table \ref{tab:weighty_messages}.  It happens that nearly all of the $M_{\text{good}}$ messages are nonzero exit codes for parsers, but with no further detail.  Four of the messages are from {\tt caradoc}, which is known to be a fairly stringent parser.  One may interpret $M_{\text{good}}$ as consisting of mostly benign messages that are indicative of otherwise good files.

Running the same process on the SafeDocs Evaluation 3 Universe A bad files dataset with same threshold as before, $p_{\text{bad}}>0.25$, yields $54$ messages with an elevated probability, at approximately $0.312$.  The intersection between this set and the corresponding set of messages for the good files is nonempty and consists of $4$ messages, shown in the upper portion of Table \ref{tab:goodA_mA}.  The actual regexes appear in Table \ref{tab:weighty_messages}.

When there are many messages, the estimates of $P(K|A)$ tend to be very small, regardless of the dialect, and therefore are subject to substantial sampling error.  In the Universe A good files set, none of the files produced no messages (after inverting the meaning of any message with probability greater than $0.5$).  Therefore Equation \eqref{eq:p0_estimator} cannot be used.  Moreover, many of the messages with low probabilities are not independent.  For these reasons, it is better to select an overestimate for $p_0$, since this results in more frequent co-occurrence between messages.  The largest overestimate for $p_0$ is the value at the threshold chosen for $M_{\text{good}}$, namely $p_0=0.25$, as shown on Figure \ref{fig:goodA_mp}.  This choice will be later confirmed by the agreement between the actual and expected Dowker histograms described in the next Section by considering Figure \ref{fig:goodA_histogram}.

\begin{figure*}
  \begin{center}
    \includegraphics[width=5in]{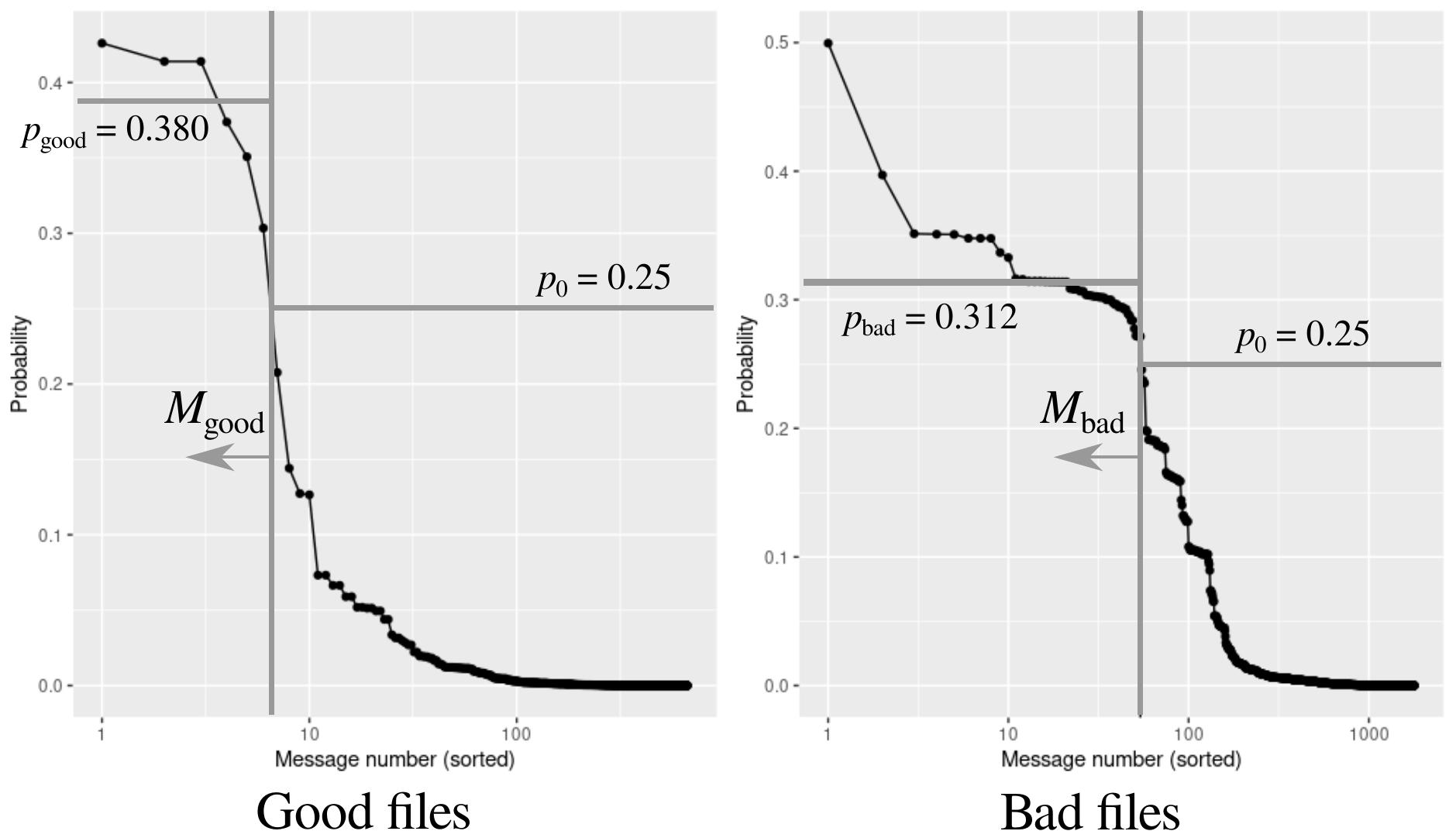}
    \caption{Probability a given message will occur for the SafeDocs Evaluation 3 Universe A good files (left) and bad files (right).  In both frames, messages in $M_A$ are to the left of the gray vertical line.  Estimates for $p_A$ and $p_0$ are shown as horizontal lines.}
    \label{fig:goodA_mp}
  \end{center}
\end{figure*}

\begin{table}
  \begin{center}
    \caption{Messages in $M_{\text{good}}$ for Universe A files}
    \label{tab:goodA_mA}
    \begin{tabular}{|c|l|c|c|}
      \hline
      Message&Parser and options&Prob. in&Prob. in\\
      &&good files&bad files\\
      \hline
      \hline
      1&{\tt caradoc extract}&0.414&0.697\\
      \hline
      122&{\tt caradoc stats}&0.414&0.697\\
      \hline
      943&{\tt origami pdfcop}&0.426&0.500\\
      \hline
      2055&{\tt qpdf}&0.303&0.603\\
      \hline
      \hline
      243&{\tt caradoc stats --strict}&0.626&0.842\\
      \hline
      334&{\tt caradoc stats --strict}&0.351&0.033\\
      \hline
    \end{tabular}
  \end{center}
\end{table}

\subsection{Message pattern distribution}

\begin{figure*}
  \begin{center}
    \includegraphics[width=5in]{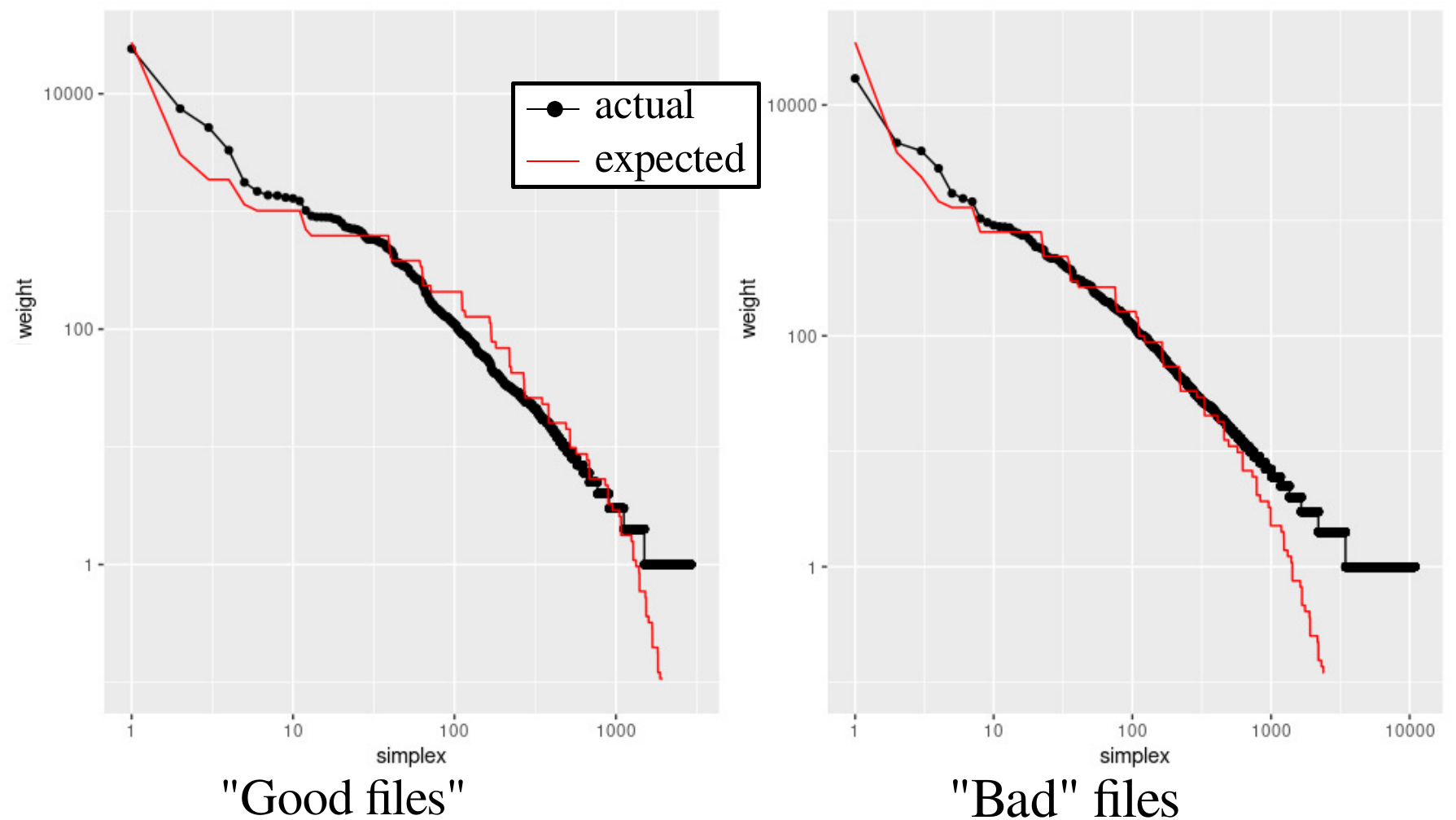}
    \caption{Histograms of expected weights (file counts) versus actual weights for the SafeDocs Evaluation 3 Universe A datasets: (left) the good file subset, used for training, and (right) the bad files subset.}
    \label{fig:goodA_histogram}
  \end{center}
\end{figure*}

  There are $2^{\#M}$ possible message patterns.  Since the total number of messages collected was large, it is not feasible to compute the expected weights for each message pattern.  Instead, it is much more practical to compare only the message patterns that were actually observed.  This means that we need to normalize the probabilities so that the sum over all \emph{observed} message patterns is $1$, instead of normalizing so that the sum over \emph{all possible} message patterns is $1$.  This being done, the comparison between observed and expected weight distributions is shown at left in Figure \ref{fig:goodA_histogram}.  There is close agreement over the entire distribution, which can be taken as validating our model of the data, and validating our choice of $p_0$ in particular.

  We can also compare the expected distribution of weights with the bad files set.  It is most interesting to do this not with $M_{\text{bad}}$ and $p_{\text{bad}}$, but rather with the parameters $M_{\text{good}}$, $p_{\text{good}}$, and $p_0$.  Because of the close agreement on the good files, differences between the expected and observed distributions in the bad files are more visible.  The resulting plot is shown at right in Figure \ref{fig:goodA_histogram}.  While there is still close agreement for the large weights (indeed, there is somewhat better agreement than for the good files), the message patterns with low weights are quite different.  The expectation is that there would be rather fewer files exhibiting particular message patterns than actually occurred.  Since low weights are correlated with higher message counts, this suggests that the bad files frequently produce messages that are not in $M_{\text{good}}$.  In other words, the bad files consist of a distinct dialect from the good files.

  \subsection{Behavior differences within the datasets}

  Grouping files based upon message patterns does yield a useful tool for clustering related behaviors.  This is exhibited in both the good and bad files, with the largest weights corresponding to clear behavioral patterns.

  The message patterns with the largest weights are shown in Table \ref{tab:good_weighty_simplices}.  Since messages $243$, $991$, $1319$, and $287$ had probabilities exceeding $0.5$, we will consider their absence rather than their presence as noted earlier.  For instance, {\tt pdfium} always produces message $991$, so it is rather uninformative.  The messages that were initially reported (without this reversal) are shown in the column ``Raw messages'', while the messages after this reversal are recorded in the column ``Messages (corrected)''.

  Although these messages are recoverable (since the files under consideration are in the good files set),
  these message patterns correspond to three distinct classes of behavior: errors within a compressed stream, syntax errors within the PDF file, and type errors.
  
  The specific messages involved in Table \ref{tab:good_weighty_simplices} are shown in Table \ref{tab:weighty_messages}.
  Several of the regexes are duplicated because the same base parser {\tt caradoc} can be run with different options.
  (These duplicated messages are not independent.)

  \begin{table*}
    \begin{center}
      \caption{The message patterns with the largest weights (file counts) in good files}
      \label{tab:good_weighty_simplices}
      \begin{tabular}{|c|l|l|c|l|}
        \hline
        Weight &Raw messages & Messages (corrected)&Message count& Error taxonomy\\
        \hline
        \hline
          24101&
          334, 943, 991, 1319, 2287&
          243, 943, 334&
          3&
Compressed stream error \\
\hline
7470&
334, 991, 1319, 2287&
243, 334&
2&
Compressed stream error \\
\hline
5170&
243, 258, 991, 1319, 2287&
258&
1&
Syntax error (lexing)\\
\hline
3313 &
351, 991, 1319, 2287&
243, 351&
2&
Syntax error (newline placement)\\
\hline
1767&
1, 19, 122, 140, 243, 330, 991, &
1, 19, 122, 140, 330&
5&
Type error \\
 &1319, 2287&&& \\
\hline
\end{tabular}
\end{center}
  \end{table*}

  For the bad files, the message patterns with the largest weights involve substantially more messages, as shown in Table \ref{tab:bad_weighty_simplices}.  Again, several messages occur with probability greater than $0.5$ and so their absence is shown in the column ``Messages (corrected)''.  These message are different from the good files, and are $1$, $122$, $243$, $991$, $1319$, $2055$, and $2287$.  Without this correction, the most common message pattern in the bad files is the same as one occurring in the good files, involving a syntax (lexing) error.  However, without the correction, it is rather different.  This message pattern contains $5170/(5170+16980)= 23\%$ good files and $77\%$ bad files.  Further delineation of this particular syntax error as recoverable or not is impossible given the messages we collected.

  Aside from the most common pattern, there is no overlap between good and bad files among the most common message patterns.  In the remaining message patterns most commonly exhibited by the bad files, there are several patterns corresponding to damaged {\tt xref} tables.  Since {\tt xref} tables provide important structural information about the contents a PDF file, it is unsurprising that damage to the {\tt xref} table results in many more messages being produced.

    \begin{table*}
    \begin{center}
      \caption{The message patterns with the largest weights (file counts) in bad files}
      \label{tab:bad_weighty_simplices}
      \begin{tabular}{|c|l|l|c|l|}
        \hline
        Weight & Raw messages & Messages (corrected)& Message count&Error taxonomy\\
        \hline
        \hline
          16980&
          243, 258, 991, 1319, 2287&
          1, 122, 258, 2055&
          4&
          Syntax error (lexing)\\
          \hline

          4702&
          (not listed for space considerations)&
          (not listed for space considerations)&
          104&
          Damaged or missing {\tt xref} table
          \\
          \hline
         
          4000&
          (not listed for space considerations)&
          (not listed for space considerations)&
          84&
          Damaged or missing {\tt xref} table
          \\
          \hline
     
          2825&
          (not listed for space considerations)&
          (not listed for space considerations)&
          72&
          Damaged or missing {\tt xref} table
          \\
          \hline

          1715&
          243,271,991,1319,2055,2287&
          1,122,271&
          3&
          Syntax error
          \\
          \hline
\end{tabular}
\end{center}
  \end{table*}

  \begin{table*}
    \begin{center}
      \caption{Messages involved in Tables \ref{tab:goodA_mA}, \ref{tab:good_weighty_simplices}, and \ref{tab:bad_weighty_simplices}}
      \label{tab:weighty_messages}
      \begin{tabular}{|c|l|l|}
        \hline
        Message & Parser & {\tt stderr} regex\\
        \hline
        \hline
        1&
        {\tt caradoc extract}&
        (exit code indicating error) \\ % (none)
        \hline
        19&
        {\tt caradoc extract} &
        {\tt Type error : Unexpected entry .* in instance of class .* in object .* !}\\ % Addition, PDF indirect objects, Type error
        \hline
        122&
        {\tt caradoc stats} & 
        (exit code indicating error) \\ % (none)
        \hline
        140&
        {\tt caradoc stats} & 
        {\tt Type error : Unexpected entry .* in instance of class .* in object .* !} \\ % Addition, PDF indirect objects, Type error
        \hline
        243&
        {\tt caradoc stats --strict} &
        (exit code indicating error) \\ % (none)
        \hline
        258&
        {\tt caradoc stats --strict}&
        \verb;PDF error : Lexing error : unexpected character : 0x[A-Fa-f\d]+ at offset;\\ 
        &&\verb;\d+ \[0x[A-Fa-f\d]+\] in file !;\\ % Addition, PDF file structure, Syntax error
        \hline
        271&
        {\tt caradoc stats --strict}&
        \verb;PDF error : Syntax error at offset \d+ \[0x[A-Fa-f\d]+\] in file !;\\ % Misformation, Syntax error, PDF file structure
        \hline
        330&
        {\tt caradoc stats --strict} & 
        {\tt Type error : Unexpected entry .* in instance of class .* in object .* !} \\ % Addition, PDF indirect objects, Type error
        \hline
        334&
        {\tt caradoc stats --strict} &
        {\tt Warning : Flate\/Zlib stream with appended newline in object .*} \\ % Value error, Addition, PDF streams
        \hline
        351&
        {\tt hammer} &
        \verb;VIOLATION\[\d+\]@\d+ \(0x[A-Fa-f\d]+\): No newline before 'endstream'; \\ 
        && \verb;\(severity\=.*\);\\ % Omission, PDF streams, Syntax error
        \hline
        943&
        {\tt origami pdfcop} &
        (exit code indicating error)\\ % (none)
        \hline
        991&
        {\tt pdfium} &
        \verb;Processed \d+ pages\.; \\ % General informative
        \hline
        1319&
        {\tt peepdf} & 
        (exit code indicating error) \\ % (none)
        \hline
        2055&
        {\tt qpdf} &
        (exit code indicating error)\\ % (none)
        \hline
        2287&
        {\tt verapdf pdfbox} & 
        (exit code indicating error) \\ % (none)
        \hline
      \end{tabular}
    \end{center}
  \end{table*}

  From a careful inspection of the first two rows of Table \ref{tab:good_weighty_simplices}, we can conclude that there is a violation of Corollary \ref{cor:inconsistent_edges}.  That is, the addition of message $943$ resulted in a higher weight with than without it.  This indicates that there is a strong relationship between message $943$ and the messages which indicate issues with compressed streams. We can infer that the presence of message $943$ is sometimes indicative of problems with compressed streams in PDFs, even though the message is merely an exit code.

  A violation of Corollary \ref{cor:inconsistent_edges} was also exhibited by the bad files as well, though it does not appear in Table \ref{tab:bad_weighty_simplices} because the lists of messages are too long to fit.  This violation effectively groups together a collection of messages related to broken {\tt xref} tables.
    
  \subsection{Interactive display of Dowker complexes}

  \label{sec:dowker_display}

  While the Dowker complex can be computed succinctly using R as discussed in Section \ref{sec:mp_and_dowker}, this implementation is not particularly efficient for large datasets.  Additionally, it does not easily support visualization of the lattice structure mentioned in previous sections.  Therefore, we developed an optimized Python version of the Dowker complex construction.  For a small number of messages, a 2d representation of the lattice structure (like what appears in Figure \ref{fig:lattice_example}) suffices, but this becomes increasingly cluttered with more messages.

  To remedy this issue, we implemented a Python version that embeds the Dowker complex in $3$ dimensions and permits interactive examination. The Dowker visualization relies on {\tt Plotly} and its 3d network graph \cite{plotly}. The 3d network graph consists of connected nodes.  Each node corresponds to a message pattern whose weight exceeds a user-chosen parameter, and each edge corresponds to the addition of a single message.  The resulting graph can be customized by setting node and edge colors, positions etc.

  % The graph requires that we enter node attributes in separate lists, one entry per node, for $x$-coordinate, $y$-coordinate, $z$-coordinate, size, color, etc, and then edge attributes as a lists of pairs of (edge origin $x$-coordinate, edge destination $x$-coordinate), etc, and list of edge colors.

We start with the Boolean matrix representation of the data stored in an array  {\tt msgMatrix} in which each file is a row and each message is a column (like in the R implementation, this is the transpose of the matrices shown in Figure \ref{fig:uniA_matrices}).  Each row corresponds to a message pattern, which will be displayed as a node in the graph upon the removal of duplicates.  The first step is to construct the mappings of nodes to attributes as well as getting all possible connected nodes to be used to identify edges.

For efficiency, our implementation uses Python's {\tt hash()} function to quickly and uniquely identify each message pattern.  Importantly, this allows us to define a function {\tt getConnNodes()} that takes a message pattern and computes the hashes for all possible message patterns with one fewer message.  Using this function {\tt getConnNodes()}, the pseudocode below shows how to construct the Dowker complex and its weights.

\begin{lstlisting}[language=python]
for row in msgMatrix:
 rowHash = hash(str(row))
 label = str(row)
 if rowHash in nodeWeightMap:
  nodeWeightMap[rowHash]+=1
 else: 
  nodeWeightMap[rowHash]=1
  nodeLabelMap[rowHash] = label
  
  # Find labels for all possible
  # connected nodes, by finding
  # all nodes with 1 less message
  nodeConnNodeMap[label] =
    getConnNodes(row)
\end{lstlisting}

The above code is only notional because we found that the call to {\tt hash(str(row))} turned out to be about $825$ times slower than using {\tt numpy} builtins to first interpret the row as bytes and then hexify into a string. This is because using {\tt str()} on an array row calls {\tt numpy}'s {\tt array2str} with internal recursion that incurs about $100$ operations per row and $20$ operations per element while {\tt numpy.packbits()} only incurs about $6$ operations per row and $0$ operations per element. Combined with other optimization efforts, the Dowker generation was sped up by a factor of $628$ over the na\"ive translation of the pseudocode.

Our preferred layout is a layered one, where each layer consists of all nodes with the same number of messages, and each layer is arranged in a circle.  The layers are sorted numerically by message count. Other visualization methods we have tried include laying out nodes in a force-directed Kamada-Kawai and Fruchterman-Reingold \cite{igraph}, though the renderings these produced were generally harder to interpret because they disrupted the layered structure.

\iffalse

\begin{lstlisting}[language=python]
# A map of integer:
# [list of all nodes with that
# number of messages].
# Helps with positioning nodes.
lengthNodesMap = {}
    
# Fill in lengthNodesMap so we
# can iterate through each layer
# of all nodes with the same length
for node in nodeLabelMap:
 length=len(nodeLabelMap[node].split(","))+1
 if not(length in lengthNodesMap):
  lengthNodesMap[length] = []
 lengthNodesMap[length].append(node)

# Fill in nodePosMap
# for each node position
for length in lengthNodesMap:
 sort(lengthNodesMap[length])
 n = length(lengthNodesMap[length])
 
 # Define circle size based on number
 # of nodes in this layer
 radius = sqrt(n) 
 for node in lengthNodesMap[length]:
  # Space nodes in circle at constant angles
  angle = 2.0*math.pi*k/n
  nodePosMap[node]=(radius*math.cos(angle),
      radius*math.sin(angle), length)
       
\end{lstlisting}

After processing the matrix into dictionaries, we then fill in the attribute lists for each node. 

\begin{lstlisting}[language=python]
nodeX = []
nodeY = []
nodeZ = []
colors = []
sizes = []
for node in nodeWeightMap:
 colors.append(log(nodeWeightMap[node]))
 nodeX.append(nodePosMap[node][0])
 sizes.append(10)
\end{lstlisting}

Next, we create the list of node coordinates and colors. 
Edge colors must be appended 3 times per node, once per dimension. In order to keep the edges separate, between every pair of coordinates for edges, we enter a {\tt None} \cite{plotlyseparate}. 

\begin{lstlisting}[language=python,tabsize=1]
# Add edge if both nodes exist
edgeX = [] 
edgeY = []
edgeZ = []
edgeColors = []
for node in nodeConnNodeMap:
for connNode in nodeConnNodeMap[node]:
 # Color edge red if node weight is
 # higher than connNode weight,
 # color it green otherwise
 if node in nodePosMap and
    connNode in nodePosMap:
  edgeColor = "red"
  if nodeWeightMap[node] <
      nodeWeightMap[connNode]:
   edgeColor = "green"
  edgeX += [nodePosMap[node][0],
    nodePosMap[connNode][0],
    None] 
  edgeY += [nodePosMap[node][1],
    nodePosMap[connNode][1],
    None] 
  edgeZ += [nodePosMap[node][2],
    nodePosMap[connNode][2],
    None] 
  edgeColors.append(edgeColor)
  edgeColors.append(edgeColor)
  edgeColors.append(edgeColor)
\end{lstlisting}
 
Finally, we enter the above lists ({\tt nodeX}, {\tt nodeY}, {\tt nodeZ}, etc.) into the {\tt plotly} graph, as shown in \cite{plotly} as a 3d scatter graph of nodes along with a 3d scatter graph of the edges. 

\fi

\begin{figure}
  \begin{center}
    \includegraphics[width=3.5in]{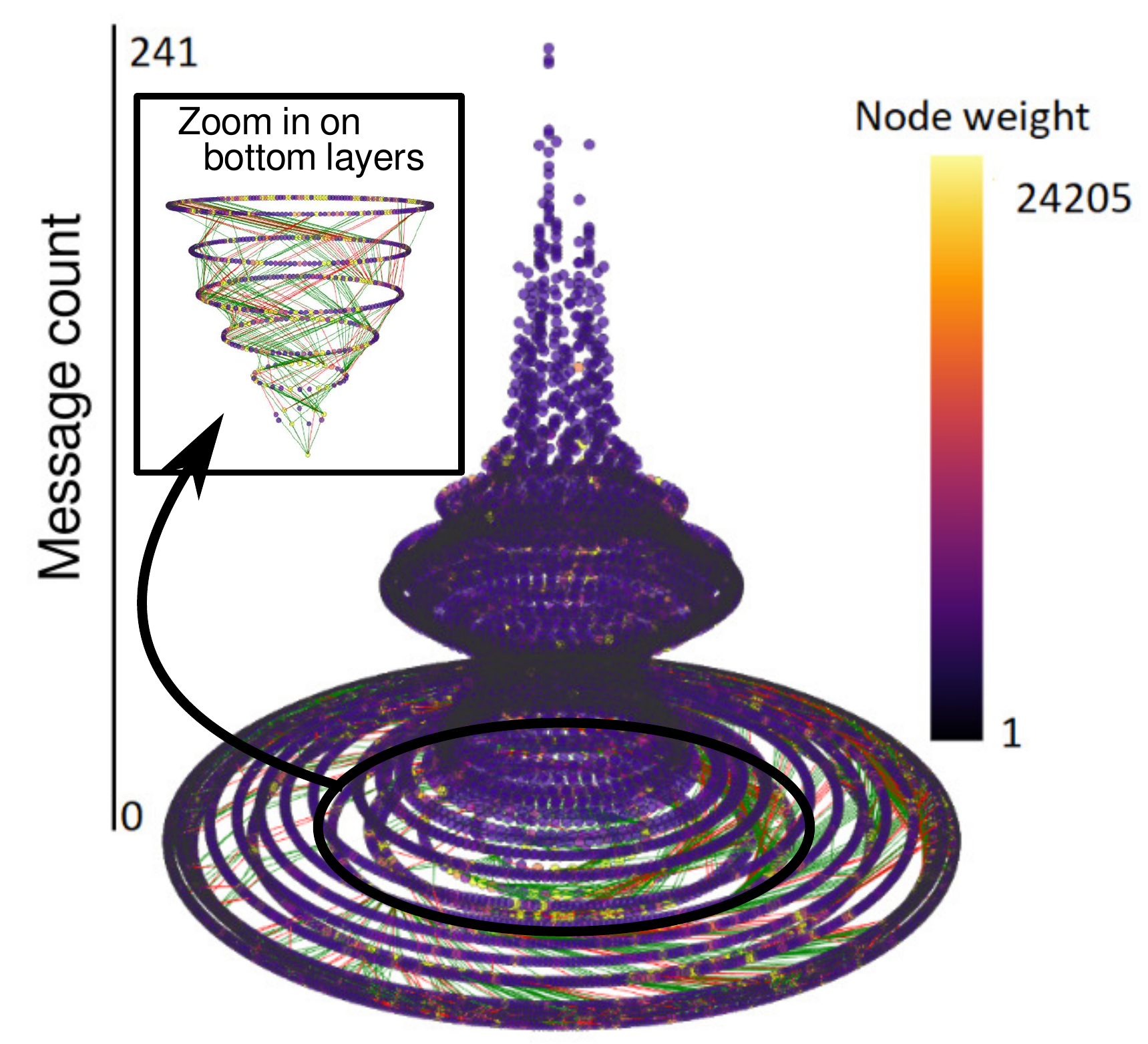}
    \caption{3d rendering of the Dowker complex for the SafeDocs Evaluation 3 Universe A, colored by weight.  Each point corresponds to a message pattern, with the number of messages increasing as one moves up in the diagram.  Edges marked in red correspond to violations of Corollary \ref{cor:inconsistent_edges}. Inset shows the ten ``layers'' corresponding to fewer than $10$ messages.}
    \label{fig:3ddowker}
  \end{center}
\end{figure}

Figure \ref{fig:3ddowker} shows the Dowker graph generated from the SafeDocs Universe A combined dataset, with nodes colored such that higher weight nodes (more files triggering those message patterns) are colored yellow, and lower weight nodes are colored purple.  The majority of files have few messages, as indicated by the wide ``base'' at the bottom of the rendering, where the brighter colored nodes are located.  Nodes become more sparse at higher layers, corresponding to files that produced more messages.  The wider ``neck'' in the the middle indicates that many message patterns had a message count around $100$.  Figure \ref{fig:3ddowker} also shows edges connecting neighboring message patterns which differ by a single message.  The edges are colored in either green to indicate that the weight decreased in accordance with Corollary \ref{cor:inconsistent_edges}, or red if the weight increased.  Because Corollary \ref{cor:inconsistent_edges} depends on the conditional independence of messages, red edges indicate that this assumption has failed for the message patterns involved.  Highly dependent messages are often indicative of dialect boundaries, so they could be candidates for further analysis based upon file contents.  Moreover, the sparsity of edges in the upper portion of the diagram indicates that most of the message patterns for the associated files files are unrelated to one another; adding or removing a single message drastically reduces the number of files exhibiting that new pattern.

\begin{figure}
  \begin{center}
    \includegraphics[width=3.5in]{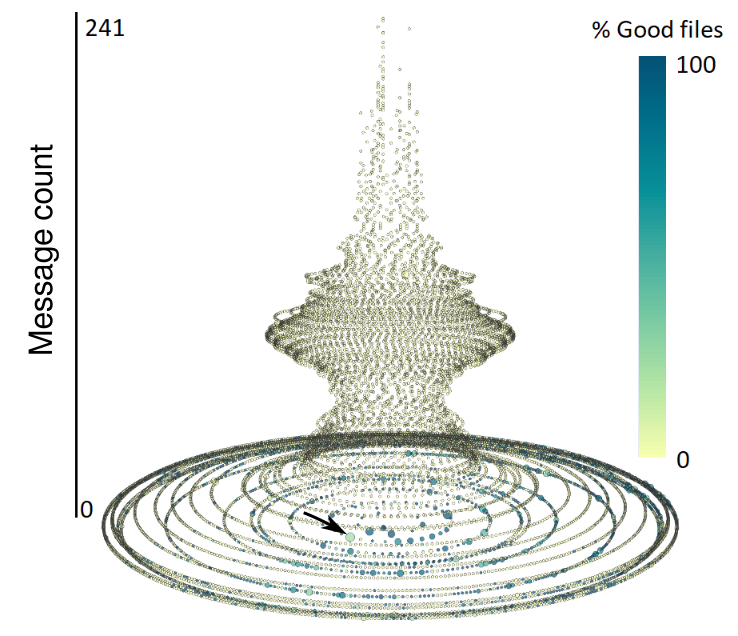}
    \caption{3d rendering of the Dowker complex for the SafeDocs Evaluation 3 Universe A, colored by percentage of good files. Each point corresponds to a message pattern.  The size of the points indicates the weight of each message pattern.  To reduce clutter, the edges are not shown.}
    \label{fig:universedowker}
  \end{center}
\end{figure}

We can also change the coloring of nodes to display the different dialects.
Figure \ref{fig:universedowker} shows the nodes' classification based on whether the files in question were good (blue), bad (yellow), or a mixture (shades of green).  It is clear that the message count is clearly indicative of whether a file is good or bad: the nodes with high message counts are all yellow, corresponding to an overwhelming majority of bad files.  Additionally, since the weights in Figure \ref{fig:universedowker} are shown by the size of the nodes, the message patterns with the largest weights shown in Table \ref{tab:good_weighty_simplices} for the good files are quite visible, and they all appear near the bottom of the diagram.  The largest weight message pattern for the bad files in Table \ref{tab:bad_weighty_simplices} is also visible near the bottom of the diagram as well, and is marked with an arrow.  The presence of this particular message pattern, and several other majority-bad nodes with low message count near the bottom of the diagram justifies the use of the Dowker complex for dialect classification, rather than using only message count.

  \subsection{Separating dialects by thresholding}
\label{sec:posterior_performance}
  Thresholding posterior probabilities works well for separating the good files from the bad files.
  Starting with the Universe A good files as a training set ensures that we have a training set that is mostly good files.

  From this training set, we estimate $P(K|\text{good})$ for all message patterns $K$ that are exhibited in the data.
  To this end, we can either use Equation \eqref{eq:message_prob} (theoretical) based upon our previous estimates of $M_{\text{good}}$ and $p_{\text{good}} = 0.380$, or we can compute $P(K|\text{good})$ directly by counting how many message patterns are exhibited (empirical).

  Now let us consider the combined dataset with two dialects, namely both the Universe A good and bad files, but let us ``forget'' which file comes from which set.
  Given the fact that we know how many files of each dialect there are (but not which file is which), we know that $P(\text{good}) = 0.5$, since the data happen to contain equal numbers of both files.
  For this combined dataset, we can estimate $P(K)$ directly from the data by counting the number of times each message pattern occurs (just as we did for $P(K|\text{good})$).

  Given all of these facts, we can then use Equation \eqref{eq:posterior_prob} to determine the probability $P(\text{good}|K)$ that a given file is good, given the particular message pattern $K$ produced by the file.  It still remains to select a probability threshold to use to determine whether we declare a file as good or bad.  For that threshold, we can determine the \emph{recall} (the fraction of good files with probability above our chosen threshold), and the \emph{precision} (the fraction of truly good files above our threshold versus the total number of files above our threshold).  An ideal classifier will have both precision and recall as close to $1$ as possible.  

  \begin{figure}
  \begin{center}
    \includegraphics[width=3in]{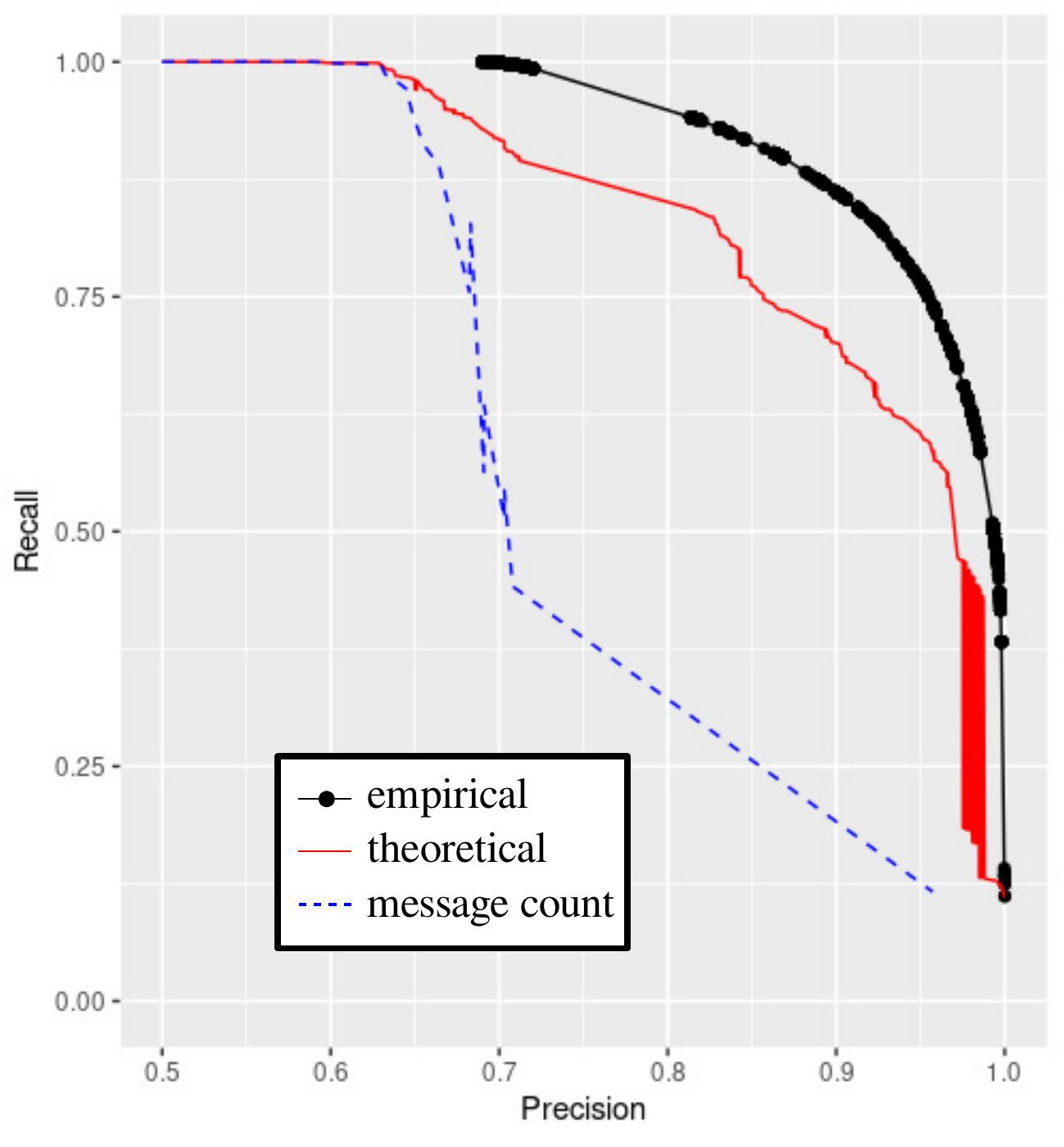}
    \caption{Precision versus recall for separating the SafeDocs Evaluation 3 Universe A good files from the bad files, estimating $P(K|\text{good})$ empirically from training on the good files (black), using Equation \eqref{eq:message_prob} with parameters as described in the text (red), or using message count alone (blue, dashed).}
    \label{fig:posterior_performance}
  \end{center}
\end{figure}

  Since we do not have any prior knowledge, the best measure of performance is to consider all possible thresholds, and to aggregate all precision and recall scores.  This is shown in Figure \ref{fig:posterior_performance}.  The figure shows three curves: the red curve uses Equation \eqref{eq:message_prob} to estimate $P(K|\text{good})$, the black curve uses the values estimated empirically from the good files alone, and the blue dashed curve shows the performance of classifying using message count alone.  While message count alone does not do a good job of classifying (largely because it misclassifies the bad files that produce only a few messages), the other two curves show good performance.   Using the empirical estimates yields the best overall performance.  This is not unexpected, since the conditional independence assumption made by Equation \eqref{eq:message_prob} does not entirely hold.

  From Figure \ref{fig:posterior_performance}, we conclude that the precision of our method typically exceeds its recall for most threshold choices.  One can interpret this to mean that the files with a high probability of being good based upon their message patterns are indeed good, though many good files are missed because they exhibit more unusual message patterns.  Intuitively, this means that many messages produced are of a benign nature.

%%%%%%%%%%%%%%%%%%%%%%%%%%%%%%%%%%%%%%%%%%%%%%%%

\section{Conclusion}

This paper provides a theoretical basis and practical algorithms for determining the format dialect of files within a dataset based upon the statistics of messages that parsers produce as they consume the files.  The methods we used are based upon thresholding the posterior probability of a file being in a certain dialect, using the idea that messages occur independently once they are conditioned upon dialect.
Even though a na\"ive classification of files based on message count might seem clearly the best, our method outperforms this by a wide margin.

Moreover, using our method, a format analyst can therefore greatly reduce the number of files they need to consider, by focusing their attention on \emph{only} the files exhibiting message patterns with an ambiguous posterior probability. 
By looking at only these files, one can likely discover features that serve as ``cut points'' between dialects.
Moreover, the theoretical file ratios allow one to predict which message patterns will be easy to disposition and which will not.
Those that are not easy to disposition will tell the format analyst about what kind of new messages need to be crafted to discriminate between dialects.
Such new messages are likely easier to construct under the condition that \emph{the ambiguous message pattern is already present}.  This may greatly reduce the number of files that need to be considered when constructing new messages.

Besides dispositioning of files as one dialect or another, the relationships between the message patterns themselves allow for a finer analysis.
Our theoretical model establishes that the number of files exhibiting a given pattern of messages should decrease as more messages are triggered (Corollary \ref{cor:inconsistent_edges}).
Violations of this result indicates places where our assumption of conditional independence is violated.
In our data, these violations allow one to draw inferences about the semantic meaning of certain parser exit codes that are not associated with human-readable regular expressions.

Conversely, it is sometimes a valuable exercise to craft intentionally ambiguous files, such as polyglot or schizophrenic files.
Polyglot files can be used to probe a format specification, as they often trigger unexpected corner cases in its logic.
Intuitively, polyglot files are easiest to construct when they elicit a pattern of messages with file ratio close to $1$.  Knowing which message patterns already have file ratios close to $1$ may aid in constructing these files.

% use section* for acknowledgment
\section*{Acknowledgments}

The authors would like to thank the SafeDocs test and evaluation team, including NASA (National Aeronautics and Space Administration) Jet Propulsion Laboratory, California Institute of Technology and the PDF Association, Inc., for providing the test data.  The authors would like to thank Denley Lam for the initial processing of the files into sets of messages.

This material is based upon work supported by the Defense Advanced Research Projects Agency (DARPA) SafeDocs program under contract HR001119C0072.  Any opinions, findings and conclusions or recommendations expressed in this material are those of the authors and do not necessarily reflect the views of DARPA.

% trigger a \newpage just before the given reference
% number - used to balance the columns on the last page
% adjust value as needed - may need to be readjusted if
% the document is modified later
%\IEEEtriggeratref{8}
% The "triggered" command can be changed if desired:
%\IEEEtriggercmd{\enlargethispage{-5in}}

% references section

% can use a bibliography generated by BibTeX as a .bbl file
% BibTeX documentation can be easily obtained at:
% http://mirror.ctan.org/biblio/bibtex/contrib/doc/
% The IEEEtran BibTeX style support page is at:
% http://www.michaelshell.org/tex/ieeetran/bibtex/
\bibliographystyle{IEEEtran}
% argument is your BibTeX string definitions and bibliography database(s)
%\bibliography{IEEEabrv,../bib/paper}
%
% <OR> manually copy in the resultant .bbl file
% set second argument of \begin to the number of references
% (used to reserve space for the reference number labels box)
\bibliography{dowkerstat_bib}

% that's all folks
\end{document}